\theoremstyle{plain}
\newtheorem{theorem}{Theorem}
\newtheorem{lemma}[theorem]{Lemma}
\newtheorem{example}[theorem]{Example}
\theoremstyle{definition}
\newtheorem{definition}[theorem]{Definition}
\title{Matrix product and quasi-twisted codes in one class
%%%% Cite as
%%%% Update your official citation here when published 
%\thanks{\textit{\underline{Citation}}: 
%\textbf{Authors. Title. Pages.... DOI:000000/11111.}} 
}
\author{
  Ramy Taki Eldin \\
  Faculty of Engineering, Ain Shams University, Cairo, Egypt\\
  Egypt University of Informatics, Knowledge City, New Administrative Capital, Egypt\\
  \texttt{ramy.farouk@eng.asu.edu.eg} \\
  %% examples of more authors
%   \And
%  Author3 \\
%  Affiliation \\
%  Univ \\
%  City\\
%  \texttt{email@email} \\
  %% \AND
  %% Coauthor \\
  %% Affiliation \\
  %% Address \\
  %% \texttt{email} \\
  %% \And
  %% Coauthor \\
  %% Affiliation \\
  %% Address \\
  %% \texttt{email} \\
  %% \And
  %% Coauthor \\
  %% Affiliation \\
  %% Address \\
  %% \texttt{email} \\
}
\begin{document}
\maketitle

\begin{abstract}
Many classical constructions, such as Plotkin's and Turyn's, were generalized by matrix product (MP) codes. Quasi-twisted (QT) codes, on the other hand, form an algebraically rich structure class that contains many codes with best-known parameters. We significantly extend the definition of MP codes to establish a broader class of generalized matrix product (GMP) codes that contains QT codes as well. We propose a generator matrix formula for any linear GMP code and provide a condition for determining the code size. We prove that any QT code has a GMP structure. Then we show how to build a generator polynomial matrix for a QT code from its GMP structure, and vice versa. Despite that the class of QT codes contains many codes with best-known parameters, we present different examples of GMP codes with best-known parameters that are neither MP nor QT. Two different lower bounds on the minimum distance of GMP codes are presented; they generalize their counterparts  in the MP codes literature. The second proposed lower bound replaces the non-singular by columns matrix with a less restrictive condition. Some examples are provided for comparing the two proposed bounds, as well as showing that these bounds are tight.
\end{abstract}

% keywords can be removed
\keywords{Matrix product code \and quasi-twisted code \and best-known parameter \and lower bound}

\section{Introduction}\label{Intro}
In classical coding theory, constructing good long codes from shorter ones is a well-known approach. Several such constructions have been discussed in the literature, including Plotkin's well-known $(u|u + v)$-construction, Turyn's $(a + x|b + x|a + b + x)$-construction \cite{LingS01}, and the ternary $(u + v + w|2u + v|u)$-construction \cite{Kschischang1992}. To generalize such structures, matrix product (MP) codes were originated in \cite{Blackmore2001, zbudak2002}. In \cite{Blackmore2001}, a generator matrix formula for MP codes is proposed, and the code size is determined under certain condition. Furthermore, it has been demonstrated that Reed-Muller codes are iterative MP constructions. Since then, the importance of the class of MP codes began to emerge. 

An MP code of length $m N$ is constructed using $M$ codes $\mathcal{C}_1,\ldots, \mathcal{C}_M$ of length $m$, and an $M\times N$ matrix $\mathcal{A}$. Two different lower bounds on the minimum distance of MP codes have been introduced in the literature. The lower bound introduced in \cite{zbudak2002, Hernando2009} requires that $\mathcal{A}$ be of rank $M$, i.e., $\mathrm{rank}\left(\mathcal{A}\right)=M$. While the lower bound introduced in \cite{Blackmore2001} requires a more restrictive condition, $\mathcal{A}$ must be non-singular by columns (NSC). A proposed extension in the MP construction is studied in \cite{Hernando2010}. In this extension, the matrix $\mathcal{A}$ can have polynomial entries that are coprime to $x^m-1$ while using cyclic codes $\mathcal{C}_1, \ldots, \mathcal{C}_M$. It has been shown that the lower bound introduced in \cite{zbudak2002,Hernando2009} can be naturally generalized to this MP extension.

Cyclic codes have been of particular importance in algebraic coding theory because of their rich algebraic structure. These codes have been generalized in two different ways: generalizing the shift constant results in the class of constacyclic codes, while generalizing the shift index results in the class of quasi-cyclic codes. Although these are two broader classes of cyclic codes, neither of them contains the other. The class of quasi-twisted (QT) codes \cite{Jia2012} was then introduced as a complete class that contains all cyclic, constacyclic, quasi-cyclic codes. Moreover, QT codes have a rich algebraic structure; they can be represented as submodules of free modules over the ring of polynomials, as mentioned in \cite{Taki1}. The class of QT codes is promising not only because of its rich structure, but also because it contains many codes with best-known parameters \cite{QTDatabase,Kasami1974}. It was pointed out in \cite{Cao2020} that there is sometimes a connection between MP and QT codes. It has been shown that some constacyclic codes over finite fields are monomially equivalent to some MP codes with $\mathcal{C}_1,\ldots,\mathcal{C}_M$ constacyclic.

Convinced of the importance of these classes, the main idea of this paper is to offer one class that combines MP and QT codes. Specifically, we establish a new class of codes over finite fields called generalized matrix product (GMP) codes, which has never been studied or even mentioned in the literature. Our definition of GMP codes is a thorough generalization of MP codes. A GMP code $\mathcal{Q}$ over a finite field $\mathbb{F}_q$ of length $m N$ is constructed from $M$ codes $\mathcal{C}_1,\ldots,\mathcal{C}_M$ over $\mathbb{F}_q$ of length $m$, some $M\times N$ matrices over $\mathbb{F}_q$, and a linear transformation on $\mathbb{F}_q^m$. In fact, $\mathcal{Q}$ is a classical MP code if the linear transformation is either the identity map or the zero map. We show that $\mathcal{Q}$ is linear if $\mathcal{C}_1,\ldots,\mathcal{C}_M$ are linear. In this case, we were able to determine a generator matrix formula for $\mathcal{Q}$ that corresponds to that described in \cite{Blackmore2001} for MP codes. We additionally establish a condition under which the size of $\mathcal{Q}$ is the product of the sizes of $\mathcal{C}_1, \ldots,\mathcal{C}_M$. It is a straightforward to see that GMP codes also generalize the MP extension mentioned in \cite{Hernando2010}. 

Section \ref{Sec4} will be devoted to prove that any QT code has actually a GMP structure. This proves our claim that GMP codes combine MP and QT codes into one class. In addition, we show how to derive the GMP structure of a QT code from its generator polynomial matrix, and how to construct a generator polynomial matrix for a QT code written as GMP. Because QT codes contain many codes with best-known parameters, it would not be surprising to find many of these codes in the broader class of GMP codes. Fortunately, GMP codes contain many more best-known parameters codes that are neither MP nor QT. This is shown in Section \ref{BKPexamples}, where we present several GMP codes with different parameters and over different finite fields. All of these codes are neither MP nor QT, but according to the database \cite{codetables}, they achieve the best-known parameters as linear codes. It turns out that it is not necessary to use $\mathcal{C}_1,\ldots,\mathcal{C}_M$ with best-known parameters to obtain $\mathcal{Q}$ with best-known parameters, see Example \ref{ExampleNine}.

For the sake of completeness, we find it necessary to provide lower bounds on the minimum distance of GMP codes. In the first bound, we extend the condition associated with the lower bound of MP codes established in \cite{zbudak2002, Hernando2009} to GMP codes. We present some examples for showing that this bound is attainable and hence tight. We then intend to propose another lower bound that generalizes the one in \cite[Theorem 3.7]{Blackmore2001}. But requiring $\mathcal{A}$ to be NSC is a restrictive condition and would be better replaced by a weaker condition. Theorem \ref{second_bound} provides a lower bound that requires a less restrictive condition than NSC. We show that this lower bound is also tight. Some numerical examples are provided to compare the proposed lower bounds. 

The remainder of this paper is organized as follows. Section \ref{Sec2} covers the MP and QT code structures required for the sequel. Section \ref{Sec3} establishes the GMP structure. It generalizes \cite{Blackmore2001} from MP to GMP codes. In Section \ref{BKPexamples}, we present several examples of GMP codes that are neither MP nor QT but achieve the best-known parameters according to the database \cite{codetables}. In Section \ref{Sec4}, we prove that GMP codes combine MP and QT codes in one class. Section \ref{Sec5} proves the two proposed lower bounds. Finally, we conclude the work in Section \ref{Concl}.

\section{Structures of QT and MP codes}\label{Sec2}
This section reviews the algebraic structure of QT and MP codes, which will be employed in the subsequent sections. For an extensive review of the algebraic structure of QT and MP codes, see \cite{Jia2012, Taki1} and \cite{Blackmore2001, Hernando2009} respectively. A code $\mathcal{C}$ over $\mathbb{F}_q$ of length $m$ is a subset of $\mathbb{F}_q^m$ whose elements are the codewords of $\mathcal{C}$. If $\mathcal{C}$ is a subspace of $\mathbb{F}_q^m$, we call it linear. For a nonzero $\lambda \in\mathbb{F}_q$, a linear code is $\lambda$-constacyclic if it is invariant under the linear transformation that represents the $\lambda$-constacyclic shift 
$$\left(c_0,c_1,\ldots,c_{m-2},c_{m-1}\right)\mapsto \left(\lambda c_{m-1},c_0,c_1,\ldots,c_{m-2}\right).$$
This linear transformation corresponds to left matrix multiplication by $\mathcal{T}_\lambda$ when codewords of $\mathcal{C}$ are represented as $m\times 1$ matrices over $\mathbb{F}_q$, where
\begin{equation}
\label{LambdaTransformation}
\mathcal{T}_\lambda=\begin{bmatrix}
0&0&0&\cdots&0&0&\lambda\\
1&0&0&\cdots&0&0&0\\
0&1&0&\cdots&0&0&0\\
\vdots&\vdots&\vdots&\ddots&\vdots&\vdots&\vdots\\
0&0&0&\cdots&1&0&0\\
0&0&0&\cdots&0&1&0
\end{bmatrix}.
\end{equation}
A $\lambda$-constacyclic code of length $m$ is therefore a $\mathcal{T}_\lambda$-invariant subspace of $\mathbb{F}_q^{m\times 1}\cong \mathbb{F}_q^m$, where $\mathbb{F}_q^{m\times 1}$ is the vector space of all $m\times 1$ matrices over $\mathbb{F}_q$. A cyclic code is constacyclic with $\lambda=1$. 

It is often useful to use a polynomial representation for constacyclic codes. Define the vector space isomorphism $\varphi:\mathbb{F}_q^{m\times 1} \rightarrow \mathcal{R}$, where $\mathcal{R}=\mathbb{F}_q[x]/\langle x^m-\lambda\rangle$ is the quotient of the ring of polynomials over $\mathbb{F}_q$ by the ideal $\langle x^m-\lambda \rangle$. Precisely,
\begin{equation*}
\varphi:\left[ c_0\ c_1\ \cdots\ c_{m-1}\right]^T \mapsto c_0+c_1x+\cdots +c_{m-1}x^{m-1},
\end{equation*}
where $^T$ denotes the matrix transpose. It is only necessary to observe that $\varphi\left( \mathcal{T}_\lambda \mathbf{c}\right)= x\varphi\left( \mathbf{c}\right)$ for any $\mathbf{c}\in\mathbb{F}_q^{m\times 1}$, and hence the polynomial representation of a $\lambda$-constacyclic code $\mathcal{C}$ over $\mathbb{F}_q$ of length $m$ is an ideal of $\mathcal{R}$. Then there exists a generator polynomial $g\left(x\right)\in\mathcal{R}$ for which we can write
$$\mathcal{C}=\langle g\left(x\right)\rangle=\left\{a\left(x\right) g\left(x\right) | a\left(x\right)\in\mathcal{R}\right\}.$$

QT codes generalize constacyclic codes in the same way as quasi-cyclic codes generalize cyclic codes. For a nonzero $\lambda \in \mathbb{F}_q$, a $\lambda$-QT code $\mathcal{Q}$ over $\mathbb{F}_q$ of index $\ell$ and co-index $m$ is a subspace of $\mathbb{F}_q^{m\times \ell}$ that is invariant under left multiplication by $\mathcal{T}_\lambda$, where $\mathbb{F}_q^{m\times \ell}$ is the vector space of all $m\times \ell$ matrices over $\mathbb{F}_q$. In this context, we view codewords of $\mathcal{Q}$ as $m\times \ell$ matrices and say that $\mathcal{Q}$ is a $\mathcal{T}_\lambda$-invariant subspace of $\mathbb{F}_q^{m\times \ell}$. Different algebraic structures for QT codes have been presented in the literature, perhaps the most relevant of which is the polynomial representation. We extend $\varphi$ to a vector space isomorphism between $\mathbb{F}_q^{m\times \ell}$ and $\mathcal{R}^\ell$ as follows
\begin{equation}
\label{MapPhi}
\varphi:\left[ \mathbf{c}_1 \  \mathbf{c}_2 \ \cdots \ \mathbf{c}_\ell \right] \mapsto \left[ \varphi\left(\mathbf{c}_1\right) \  \varphi\left(\mathbf{c}_2\right) \ \cdots \ \varphi\left(\mathbf{c}_\ell\right) \right],
\end{equation}
where $\mathbf{c}_i \in \mathbb{F}_q^{m\times 1}$ and $\varphi\left(\mathbf{c}_i\right) \in \mathcal{R}$ for $1\le i\le \ell$. Again, left multiplication by $\mathcal{T}_\lambda$ in $\mathbb{F}_q^{m\times \ell}$ corresponds to multiplication by $x$ in $\mathcal{R}^\ell$. This is because
\begin{equation}
\label{PhiProperty}
\begin{split}
\varphi\left(\mathcal{T}_\lambda \left[ \mathbf{c}_1 \  \mathbf{c}_2 \ \cdots \ \mathbf{c}_\ell \right]\right) &= \varphi\left( \left[ \mathcal{T}_\lambda\mathbf{c}_1 \  \mathcal{T}_\lambda\mathbf{c}_2 \ \cdots \ \mathcal{T}_\lambda\mathbf{c}_\ell \right]\right)= \left[ \varphi\left(\mathcal{T}_\lambda\mathbf{c}_1\right) \  \varphi\left(\mathcal{T}_\lambda\mathbf{c}_2\right) \ \cdots \ \varphi\left(\mathcal{T}_\lambda\mathbf{c}_\ell\right) \right]\\
&= \left[ x\varphi\left(\mathbf{c}_1\right) \  x\varphi\left(\mathbf{c}_2\right) \ \cdots \  x\varphi\left(\mathbf{c}_\ell\right) \right]=x \left[ \varphi\left(\mathbf{c}_1\right) \  \varphi\left(\mathbf{c}_2\right) \ \cdots \  \varphi\left(\mathbf{c}_\ell\right) \right]\\
&=x \varphi\left(\left[ \mathbf{c}_1 \  \mathbf{c}_2 \ \cdots \   \mathbf{c}_\ell \right]\right).
\end{split}
\end{equation}
This means that the polynomial representation of a $\lambda$-QT code $\mathcal{Q}$ is an $\mathcal{R}$-submodule of $\mathcal{R}^\ell$. Any generating set of $\mathcal{Q}$ as an $\mathcal{R}$-submodule of $\mathcal{R}^\ell$ has a size of at most $\ell$. Elements of a generating set construct rows of a generator polynomial matrix for $\mathcal{Q}$, denoted $\mathfrak{G}$. As $\mathcal{Q}$ may have several generating sets, $\mathfrak{G}$ is not always unique. Furthermore, the size of $\mathfrak{G}$ is not necessary unique; see the paragraph that follows Example \ref{Ex2QTtoGMP}. A QT code with $\lambda=1$ is quasi-cyclic.

Let $\mathcal{C}_1, \mathcal{C}_2, \ldots, \mathcal{C}_M$ be a collection of codes over $\mathbb{F}_q$ of length $m$, and let $\mathcal{A}$ be an $M\times N$ matrix over $\mathbb{F}_q$. The MP code $\mathcal{Q}=\left[ \mathcal{C}_1\  \mathcal{C}_2 \  \cdots \ \mathcal{C}_M\right] \mathcal{A}$ is the subset 
$$\left\{\left[ \mathbf{c}_1\  \mathbf{c}_2 \  \cdots \ \mathbf{c}_M\right] \mathcal{A} \ | \ \mathbf{c}_1\in\mathcal{C}_1, \mathbf{c}_2\in\mathcal{C}_2, \ldots, \mathbf{c}_M\in\mathcal{C}_M\right\}$$
of $\mathbb{F}_q^{m\times N}$. Thus, a typical codeword of $\mathcal{Q}$ is an $m\times N$ matrix whose $(h,l)$-entry is $\sum_{t=1}^{M}\mathrm{Ent}_h\left(\mathbf{c}_t\right)\mathrm{Ent}_{(t,l)}\left(\mathcal{A}\right)$, where $\mathrm{Ent}$ stands for ``Entry''. When $\mathcal{A}$ is the identity matrix $\mathcal{I}_M$, $\mathcal{Q}$ is the direct sum of $\mathcal{C}_1, \mathcal{C}_2, \ldots, \mathcal{C}_M$. In the case when $\mathcal{C}_i$ is linear for every $1\le i\le M$, $\mathcal{Q}$ is linear. Then a generator matrix $\mathbf{G}$ for $\mathcal{Q}$ can be formed by representing the codewords of $\mathcal{Q}$ as vectors in $\mathbb{F}_q^{mN}$ by reading the corresponding matrices in column-major order. Namely,
\begin{equation}
\label{GenMatofMP}
\mathbf{G}=\mathrm{diag}\left[\mathbf{G}_1,\ldots,\mathbf{G}_M\right]  \left( \mathcal{A} \otimes \mathcal{I}_m\right),
\end{equation}
where $\mathrm{diag}\left[\mathbf{G}_1,\ldots,\mathbf{G}_M\right]$ is the block diagonal matrix whose block diagonal entries are the generator matrices of $\mathcal{C}_1,\ldots,\mathcal{C}_M$, and $\otimes$ is the Kronecker product. Formula \eqref{GenMatofMP} is identical to the one described in \cite{Blackmore2001}, however we express it more concisely to be consistent with the subsequent results. This formula will be generalized to GMP codes in Theorem \ref{GenMat}. When $\mathcal{A}$ is right invertible, or, equivalently, $\mathrm{rank}\left(\mathcal{A}\right)=M$, the size $|\mathcal{Q}|$ of $\mathcal{Q}$ is the product $\prod_{i=1}^{M}|\mathcal{C}_i|$.

An $M\times N$ matrix $\mathcal{A}$ over $\mathbb{F}_q$ with $M \le N$ is called non-singular by columns (NSC) if for every $1\le t\le M$ and every choice $1\le j_1 < j_2 <\cdots <j_t \le N$ the $t\times t$ submatrix $\mathcal{A}^{(t)}\left(j_1, j_2, \ldots, j_t \right)$ of $\mathcal{A}$ consisting of the first $t$ rows and the $j_1, j_2, \ldots, j_t$ columns is non-singular. In \cite{Blackmore2001}, a lower bound on the minimum distance $d\left(\mathcal{Q}\right)$ of an MP code $\mathcal{Q}=\left[ \mathcal{C}_1\  \mathcal{C}_2 \  \cdots \ \mathcal{C}_M\right] \mathcal{A}$ with NSC $\mathcal{A}$ is given by
\begin{equation}
\label{bound1}
d\left(\mathcal{Q}\right)\ge \min \left\{ N d\left(\mathcal{C}_1\right), \left(N-1\right)d\left(\mathcal{C}_2\right), \ldots, \left(N-M+1\right)d\left(\mathcal{C}_M\right)\right\}.
\end{equation}
Being NSC is a restrictive requirement on $\mathcal{A}$, which has been waived in the lower bound proposed in \cite{zbudak2002,Hernando2009}. This lower bound only requires $\mathcal{A}$ to be of full rank, which is more convenient since most MP codes have matrices that are not necessarily NSC. Specifically, if $\mathrm{rank}\left(\mathcal{A}\right)=M$, then
\begin{equation}
\label{bound2}
d\left(\mathcal{Q}\right)\ge \min\left\{ D_1 d\left(\mathcal{C}_1\right), D_2 d\left(\mathcal{C}_2\right), \ldots, D_M d\left(\mathcal{C}_M\right)\right\},
\end{equation}
where $D_t$ is the minimum distance of the linear code over $\mathbb{F}_q$ of length $N$ spanned by the $t\times N$ submatrix $\mathcal{A}^{(t)}$. This lower bound is shown in \cite{Hernando2010} to be applicable in the MP extension in which $\mathcal{A}$ has polynomial entries coprime to $x^m-1$ rather than elements in $\mathbb{F}_q$ and $\mathcal{C}_1,\ldots,\mathcal{C}_M$ are cyclic.

\section{Generalized MP codes}
\label{Sec3}
As a generalization of MP codes, we mainly investigate the class of GMP codes in this section. This class has not been introduced previously in the literature. A GMP code $\mathcal{Q}$ of length $mN$ is constructed from $M$ codes $\mathcal{C}_i$ of length $m$, some matrices of size $M\times N$, and a linear transformation on $\mathbb{F}_q^m$. Codewords of $\mathcal{C}_i$ and $\mathcal{Q}$ are, respectively, $m\times 1$ and $m\times N$ matrices over $\mathbb{F}_q$.
\begin{definition}
\label{GMP_Def}
Let $\mathcal{T}$ be the matrix of a linear transformation on $\mathbb{F}_q^m$ with respect to the standard basis, let $\mathcal{C}_1,\ldots,\mathcal{C}_M$ be a collection of codes over $\mathbb{F}_q$ of length $m$, and let $\mathcal{A}_0,\ldots, \mathcal{A}_r$ be some $M\times N$ matrices over $\mathbb{F}_q$. The GMP code $\mathcal{Q}=\sum_{k=0}^{r} \mathcal{T}^k \left[ \mathcal{C}_1 \  \cdots \ \mathcal{C}_M\right] \mathcal{A}_k$ is the subset of $\mathbb{F}_q^{m \times N}$ that contains all matrices of the form $\sum_{k=0}^{r} \mathcal{T}^k \left[ \mathbf{c}_1 \ \cdots \ \mathbf{c}_M\right] \mathcal{A}_k$, where $\mathbf{c}_i\in\mathcal{C}_i$ for $1\le i\le M$.
\end{definition}
For any $m\times m$ matrix $\mathcal{T}$, we follow the convention that $\mathcal{T}^0=\mathcal{I}_m$. A GMP code is therefore written as 
\begin{equation*}
\begin{split}
\mathcal{Q}= \sum_{k=0}^{r} \mathcal{T}^k \left[ \mathcal{C}_1 \ \cdots \ \mathcal{C}_M\right] \mathcal{A}_k
= \left[ \mathcal{C}_1 \ \cdots \  \mathcal{C}_M\right] \mathcal{A}_0+\sum_{k=1}^{r} \mathcal{T}^k \left[ \mathcal{C}_1 \ \cdots \ \mathcal{C}_M\right] \mathcal{A}_k.
\end{split}
\end{equation*}
This broadens the definition of MP codes in \cite[Definition 2.1]{Blackmore2001} because $\mathcal{Q}$ is MP if $\mathcal{T}=\mathcal{I}_m$, $\mathcal{T}=\mathbf{0}_{m\times m}$, or $\mathcal{A}_1=\cdots=\mathcal{A}_r=\mathbf{0}_{M\times N}$, where $\mathbf{0}_{M\times N}$ is the zero matrix of size $M\times N$. Specifically, $\mathcal{Q}=  \left[ \mathcal{C}_1 \ \cdots \  \mathcal{C}_M\right]\mathcal{A}$, where $\mathcal{A}= \sum_{k=0}^{r}\mathcal{A}_k$ when $\mathcal{T}=\mathcal{I}_m$, and $\mathcal{A}=\mathcal{A}_0$ when $\mathcal{T}=\mathbf{0}_{m\times m}$ or $\mathcal{A}_1=\cdots=\mathcal{A}_r=\mathbf{0}_{M\times N}$. Definition \ref{GMP_Def} places GMP codes in a very broad class, which generalizes not only MP codes, but also the MP extension described in \cite{Hernando2010} and QT codes as we shall show in Section \ref{Sec4}.

We now aim to construct a generator matrix for any linear GMP code from the generator matrices of $\mathcal{C}_1,\ldots,\mathcal{C}_M$. To this end, we express codewords of $\mathcal{Q}$ in vector form rather than matrix form using the vectorization map. We refer to the vectorization of matrices in column-major order by $\mathrm{vec}$; it defines a vector space isomorphism between $\mathbb{F}_q^{m\times N}$ and $\mathbb{F}_q^{m N}$. For instance, the vectorization of the $m\times N$ matrix $\left[a_{i,j}\right]\in \mathbb{F}_q^{m\times N}$ is
\begin{equation*}
\mathrm{vec}\left(\left[a_{i,j}\right]\right)= \left( a_{1,1},\ldots,a_{m,1},a_{1,2},\ldots,a_{m,2},\ldots,a_{1,N},\ldots,a_{m,N}\right) \in \mathbb{F}_q^{mN}.
\end{equation*}
It is now clear that $\mathrm{vec}\left(\mathcal{Q}\right)$ is a code of length $mN$, which if linear would be isomorphic to $\mathcal{Q}$. In general, distinguishing between $\mathcal{Q}$ and $\mathrm{vec}\left(\mathcal{Q}\right)$ is unnecessary, because what we mean can be determined from context. We show that $\mathcal{Q}$ (and, hence, $\mathrm{vec}\left(\mathcal{Q}\right)$) is linear if $\mathcal{C}_1,\ldots,\mathcal{C}_M$ are linear. This is simply by observing that, for any $a,b\in\mathbb{F}_q$ and $\mathbf{a}_i,\mathbf{b}_i\in\mathcal{C}_i$,
\begin{equation*}\begin{split}
&a\sum_{k=0}^{r} \mathcal{T}^k \left[ \mathbf{a}_1 \ \cdots \ \mathbf{a}_M\right] \mathcal{A}_k+b\sum_{k=0}^{r} \mathcal{T}^k \left[ \mathbf{b}_1 \ \cdots \ \mathbf{b}_M\right] \mathcal{A}_k=\\&\qquad\qquad\qquad \sum_{k=0}^{r} \mathcal{T}^k \left[ \left(a\mathbf{a}_1+b\mathbf{b}_1\right) \ \cdots \ \left(a\mathbf{a}_M+b\mathbf{b}_M\right)\right] \mathcal{A}_k \in \mathcal{Q}.
\end{split}\end{equation*}
If $\mathcal{Q}$ is linear, it is legitimate to ask about a generator matrix for $\mathrm{vec}\left(\mathcal{Q}\right)$. The following theorem offers such a generator matrix which is not necessary of full rank. 

\begin{theorem}
\label{GenMat}
Let $\mathcal{Q}=\sum_{k=0}^{r} \mathcal{T}^k \left[ \mathcal{C}_1 \ \cdots \  \mathcal{C}_M\right] \mathcal{A}_k$ for some linear codes $\mathcal{C}_1,\ldots,\mathcal{C}_M$ with generator matrices $\mathbf{G}_1, \ldots,\mathbf{G}_M$. Then a generator matrix for $\mathrm{vec}\left(\mathcal{Q}\right)$ is given by
\begin{equation*}
\mathbf{G}=\mathrm{diag}\left[\mathbf{G}_1,\ldots,\mathbf{G}_M\right] \sum_{k=0}^{r} \mathcal{A}_k \otimes \left(\mathcal{T}^T\right)^k.
\end{equation*}
\end{theorem}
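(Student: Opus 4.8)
The plan is to move from the matrix description of a codeword of $\mathcal{Q}$ to its column-major vectorization by means of the vec-trick $\mathrm{vec}\left(\mathcal{X}\mathcal{Y}\mathcal{Z}\right)=\left(\mathcal{Z}^{T}\otimes\mathcal{X}\right)\mathrm{vec}\left(\mathcal{Y}\right)$, valid whenever the product $\mathcal{X}\mathcal{Y}\mathcal{Z}$ is defined, together with the Kronecker transpose rule $\left(\mathcal{X}\otimes\mathcal{Y}\right)^{T}=\mathcal{X}^{T}\otimes\mathcal{Y}^{T}$. These are the only facts from outside the excerpt that I would use; everything else is bookkeeping.

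First I would fix a codeword of $\mathcal{Q}$ and write it as $\sum_{k=0}^{r}\mathcal{T}^{k}C\mathcal{A}_{k}$ with $C=\left[\mathbf{c}_{1}\ \cdots\ \mathbf{c}_{M}\right]\in\mathbb{F}_{q}^{m\times M}$ and $\mathbf{c}_{i}\in\mathcal{C}_{i}$. Applying $\mathrm{vec}$, using its linearity, and then the vec-trick with $\mathcal{X}=\mathcal{T}^{k}$, $\mathcal{Y}=C$, $\mathcal{Z}=\mathcal{A}_{k}$ gives
\[
\mathrm{vec}\left(\sum_{k=0}^{r}\mathcal{T}^{k}C\mathcal{A}_{k}\right)=\left(\sum_{k=0}^{r}\mathcal{A}_{k}^{T}\otimes\mathcal{T}^{k}\right)\mathrm{vec}(C).
\]
Next I would encode the columns of $C$ through the generator matrices: since $\mathbf{c}_{i}^{T}=\mathbf{u}_{i}\mathbf{G}_{i}$ for a (row) message vector $\mathbf{u}_{i}$, we have $\mathbf{c}_{i}=\mathbf{G}_{i}^{T}\mathbf{u}_{i}^{T}$, and stacking these columns yields $\mathrm{vec}(C)=\mathrm{diag}\left[\mathbf{G}_{1},\ldots,\mathbf{G}_{M}\right]^{T}\mathbf{u}^{T}$ with $\mathbf{u}=\left(\mathbf{u}_{1},\ldots,\mathbf{u}_{M}\right)$. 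Substituting and transposing the whole display (so codewords appear as row vectors, matching the convention under which a generator matrix multiplies on the right) I obtain, after applying the Kronecker transpose rule to each summand,
\[
\mathrm{vec}\left(\sum_{k=0}^{r}\mathcal{T}^{k}C\mathcal{A}_{k}\right)^{T}=\mathbf{u}\,\mathrm{diag}\left[\mathbf{G}_{1},\ldots,\mathbf{G}_{M}\right]\sum_{k=0}^{r}\mathcal{A}_{k}\otimes\left(\mathcal{T}^{T}\right)^{k}=\mathbf{u}\,\mathbf{G}.
\]
As $\left(\mathbf{c}_{1},\ldots,\mathbf{c}_{M}\right)$ runs over $\mathcal{C}_{1}\times\cdots\times\mathcal{C}_{M}$, the vector $\mathbf{u}$ runs over all row vectors of length $\sum_{i=1}^{M}\dim\mathcal{C}_{i}$, so the row span of $\mathbf{G}$ is exactly $\mathrm{vec}(\mathcal{Q})$; this is the assertion. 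As a sanity check, setting $\mathcal{T}=\mathcal{I}_{m}$ collapses the sum to $\left(\sum_{k}\mathcal{A}_{k}\right)\otimes\mathcal{I}_{m}$ and recovers \eqref{GenMatofMP}.

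The computation is routine; the one place that needs real care is the bookkeeping of conventions --- column-major versus row-major $\mathrm{vec}$, whether $\mathbf{G}_{i}$ acts on codewords from the left or the right, and where the transposes land when passing between the ``column codeword'' picture of Definition \ref{GMP_Def} (in which $\mathcal{T}$ multiplies on the left) and the ``row codeword'' picture in which one writes a generator matrix. Getting any of these backwards would replace $\mathcal{T}^{k}$ by $\left(\mathcal{T}^{T}\right)^{k}$ in the wrong slot or swap $\mathcal{A}_{k}$ with $\mathcal{A}_{k}^{T}$; the somewhat asymmetric-looking factor $\mathcal{A}_{k}\otimes\left(\mathcal{T}^{T}\right)^{k}$ is precisely the correct outcome of tracking them. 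Finally I would note that nothing forces $\mathbf{u}\mapsto\mathbf{u}\mathbf{G}$ to be injective, so $\mathbf{G}$ need not have independent rows; this is why the theorem only claims $\mathbf{G}$ to be \emph{a} generator matrix and why a separate hypothesis is needed afterwards to pin down the dimension of $\mathcal{Q}$.
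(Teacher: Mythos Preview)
Your proof is correct and follows essentially the same route as the paper: both arguments apply the vec-trick to $\mathcal{T}^{k}C\mathcal{A}_{k}$, express $\mathrm{vec}(C)$ via the block-diagonal generator matrix, and then read off that the row span of $\mathbf{G}$ is exactly $\mathrm{vec}(\mathcal{Q})$. The only cosmetic difference is that the paper works directly with the row-vector convention for $\mathrm{vec}$ (so the identity appears as $\mathrm{vec}(XYZ)=\mathrm{vec}(Y)\,(Z\otimes X^{T})$), whereas you use the column-vector form and transpose at the end; your closing remarks on conventions and on the possible rank deficiency of $\mathbf{G}$ match the paper's surrounding discussion.
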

\begin{proof}
Let $\mathbf{q}=\sum_{k=0}^{r} \mathcal{T}^k \left[ \mathbf{c}_1 \ \cdots \ \mathbf{c}_M\right] \mathcal{A}_k$ be an arbitrary codeword of $\mathcal{Q}$. Let $k_i\times m$ be the size of $\mathbf{G}_i$ with $m$ being the length of $\mathcal{C}_i$. Since each $\mathbf{c}_i\in\mathbb{F}_q^{m\times 1}$ is a codeword of $\mathcal{C}_i$, there exists $\mathbf{a}_i \in \mathbb{F}_q^{1\times k_i}$ such that $\mathbf{c}_i^T=\mathbf{a}_i \mathbf{G}_i$. Then
\begin{equation*}\begin{split}
\mathrm{vec}\left(\mathbf{q}\right)&=\mathrm{vec}\left(\sum_{k=0}^{r} \mathcal{T}^k \left[ \mathbf{c}_1 \ \cdots \ \mathbf{c}_M\right] \mathcal{A}_k \right)=\sum_{k=0}^{r} \mathrm{vec}\left( \mathcal{T}^k \left[ \mathbf{c}_1 \ \cdots \ \mathbf{c}_M\right] \mathcal{A}_k \right)\\
&= \mathrm{vec}\left(\left[ \mathbf{c}_1 \ \cdots \ \mathbf{c}_M\right]\right) \sum_{k=0}^{r} \mathcal{A}_k \otimes \left(\mathcal{T}^T\right)^k
= \left[ \mathbf{a}_1\mathbf{G}_1 \ \cdots \ \mathbf{a}_M \mathbf{G}_M\right] \sum_{k=0}^{r} \mathcal{A}_k \otimes \left(\mathcal{T}^T\right)^k\\
&= \left[ \mathbf{a}_1 \ \cdots \ \mathbf{a}_M \right] \mathrm{diag}\left[\mathbf{G}_1,\ldots,\mathbf{G}_M\right] \sum_{k=0}^{r} \mathcal{A}_k  \otimes \left(\mathcal{T}^T\right)^k
= \left[ \mathbf{a}_1 \ \cdots  \ \mathbf{a}_M \right] \mathbf{G}.
\end{split}\end{equation*}
This shows that codewords of $\mathrm{vec}\left(\mathcal{Q}\right)$ are in the row span of $\mathbf{G}$. Conversely, $\mathrm{vec}\left(\mathcal{Q}\right)$ contains the row span of $\mathbf{G}$ because
\begin{equation*}
\left[ \mathbf{a}_1 \ \cdots  \  \mathbf{a}_M \right] \mathbf{G}= \mathrm{vec}\left(\sum_{k=0}^{r} \mathcal{T}^k \left[ \mathbf{c}_1 \ \cdots  \  \mathbf{c}_M\right] \mathcal{A}_k\right) \in \mathrm{vec}\left(\mathcal{Q}\right)
\end{equation*}
for each choice of $\mathbf{a}_i \in \mathbb{F}_q^{1\times k_i}$ by letting $\mathbf{c}_i^T=\mathbf{a}_i \mathbf{G}_i$.
\end{proof}

Theorem \ref{GenMat} generalizes the generator matrix formula of MP codes given in \eqref{GenMatofMP}, e.g., by setting $\mathcal{A}_1=\cdots=\mathcal{A}_r=\mathbf{0}_{M\times N}$. The following theorem establishes a condition under which the size of a GMP code can be easily determined. This yields the analogous conclusion to \cite[Proposition 2.9]{Blackmore2001} for the more general class of GMP codes.
\begin{theorem}
\label{GenMat_dim}
Let $\mathcal{Q}=\sum_{k=0}^{r} \mathcal{T}^k \left[ \mathcal{C}_1 \ \cdots  \   \mathcal{C}_M\right] \mathcal{A}_k$ be a GMP code and assume that $\sum_{k=0}^{r} \mathcal{A}_k \otimes \left(\mathcal{T}^T\right)^k$ has rank $mM$. Then $$|\mathcal{Q}|=\prod_{i=1}^{M}|\mathcal{C}_i|.$$ 
If $\mathcal{C}_1,\ldots,\mathcal{C}_M$ are linear, then $\mathrm{dim}\left(\mathcal{Q}\right)=\sum_{i=1}^M \mathrm{dim}\left(\mathcal{C}_i\right)$, where $\mathrm{dim}$ stands for ``dimension''.
\end{theorem}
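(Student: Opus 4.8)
The plan is to show that the hypothesis forces the evaluation map $\Psi\colon \mathcal{C}_1\times\cdots\times\mathcal{C}_M\to \mathrm{vec}\left(\mathcal{Q}\right)$ that sends $\left(\mathbf{c}_1,\ldots,\mathbf{c}_M\right)$ to $\mathrm{vec}\left(\sum_{k=0}^{r}\mathcal{T}^k\left[\mathbf{c}_1\ \cdots\ \mathbf{c}_M\right]\mathcal{A}_k\right)$ to be a bijection. Surjectivity is immediate from Definition \ref{GMP_Def}, and since $\mathrm{vec}$ is itself a bijection from $\mathbb{F}_q^{m\times N}$ onto $\mathbb{F}_q^{mN}$, once $\Psi$ is known to be injective we obtain $|\mathcal{Q}|=\left|\mathrm{vec}\left(\mathcal{Q}\right)\right|=\prod_{i=1}^{M}|\mathcal{C}_i|$. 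Note that this first part has to be argued without generator matrices, since the $\mathcal{C}_i$ are not assumed to be linear.

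For the injectivity I would reuse the vectorization identity already established inside the proof of Theorem \ref{GenMat}, namely $\mathrm{vec}\left(\sum_{k=0}^{r}\mathcal{T}^k\left[\mathbf{c}_1\ \cdots\ \mathbf{c}_M\right]\mathcal{A}_k\right)=\mathrm{vec}\left(\left[\mathbf{c}_1\ \cdots\ \mathbf{c}_M\right]\right)\mathcal{B}$, where $\mathcal{B}:=\sum_{k=0}^{r}\mathcal{A}_k\otimes\left(\mathcal{T}^T\right)^k$ is an $mM\times mN$ matrix. Hence $\Psi$ factors as $\left(\mathbf{c}_1,\ldots,\mathbf{c}_M\right)\mapsto \mathrm{vec}\left(\left[\mathbf{c}_1\ \cdots\ \mathbf{c}_M\right]\right)$ followed by right multiplication by $\mathcal{B}$. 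The first map is injective because the consecutive $m$-entry blocks of the column-major vectorization of the $m\times M$ matrix $\left[\mathbf{c}_1\ \cdots\ \mathbf{c}_M\right]$ are exactly $\mathbf{c}_1,\ldots,\mathbf{c}_M$. The assumption $\mathrm{rank}\left(\mathcal{B}\right)=mM$ means $\mathcal{B}$ has full row rank, so its rows are linearly independent over $\mathbb{F}_q$; therefore the linear map $\mathbf{w}\mapsto\mathbf{w}\mathcal{B}$ on $\mathbb{F}_q^{1\times mM}$ has trivial kernel, i.e., is injective. A composition of injections being an injection, $\Psi$ is injective and the first claim follows.

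When $\mathcal{C}_1,\ldots,\mathcal{C}_M$ are linear, $\mathcal{Q}$ and hence $\mathrm{vec}\left(\mathcal{Q}\right)$ are linear (as observed just before Theorem \ref{GenMat}), so $|\mathcal{Q}|=q^{\dim\left(\mathcal{Q}\right)}$ and $|\mathcal{C}_i|=q^{\dim\left(\mathcal{C}_i\right)}$; applying $\log_q$ to $|\mathcal{Q}|=\prod_{i=1}^{M}|\mathcal{C}_i|$ gives $\dim\left(\mathcal{Q}\right)=\sum_{i=1}^{M}\dim\left(\mathcal{C}_i\right)$. (Equivalently, $\Psi$ is $\mathbb{F}_q$-linear, hence a linear isomorphism onto $\mathrm{vec}\left(\mathcal{Q}\right)$; or one argues that $\mathcal{B}$ is right invertible and so $\mathrm{rank}\left(\mathbf{G}\right)=\mathrm{rank}\left(\mathrm{diag}\left[\mathbf{G}_1,\ldots,\mathbf{G}_M\right]\right)=\sum_{i=1}^{M}\dim\left(\mathcal{C}_i\right)$ for the generator matrix $\mathbf{G}$ of Theorem \ref{GenMat}.) I do not expect a real obstacle: the whole argument is the translation of ``$\mathcal{B}$ has rank $mM$'' into ``right multiplication by $\mathcal{B}$ is injective,'' and the only care needed is to record that this rank condition is full row rank of an $mM\times mN$ matrix (so implicitly $M\le N$), to keep the left/right multiplication conventions aligned with the row-vector $\mathrm{vec}$ of Theorem \ref{GenMat}, and to keep the size statement free of generator matrices.
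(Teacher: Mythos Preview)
Your proposal is correct and follows essentially the same approach as the paper: both argue that the evaluation map from $\mathcal{C}_1\times\cdots\times\mathcal{C}_M$ to $\mathrm{vec}(\mathcal{Q})$ is injective because right multiplication by $\mathcal{B}=\sum_{k=0}^{r}\mathcal{A}_k\otimes(\mathcal{T}^T)^k$ is injective when $\mathrm{rank}(\mathcal{B})=mM$ (the paper phrases this as $\mathcal{B}$ being right invertible). Your write-up is slightly more explicit in factoring $\Psi$ and in deducing the dimension statement, but the underlying argument is identical.
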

\begin{proof}
The mapping
$$\mathrm{vec}\left(\left[ \mathbf{c}_1 \ \cdots  \  \mathbf{c}_M\right]\right) \mapsto \mathrm{vec}\left(\left[ \mathbf{c}_1 \ \cdots  \  \mathbf{c}_M\right]\right) \sum_{k=0}^{r} \mathcal{A}_k \otimes \left(\mathcal{T}^T\right)^k$$
takes each choice of codewords $\mathbf{c}_1\in\mathcal{C}_1, \ldots , \mathbf{c}_M\in\mathcal{C}_M$ to the codeword $\mathrm{vec}\left(\sum_{k=0}^{r} \mathcal{T}^k \left[ \mathbf{c}_1 \ \cdots  \  \mathbf{c}_M\right] \mathcal{A}_k \right)$ of $\mathrm{vec}\left(\mathcal{Q}\right)$. Now, if this map is injective, then $|\mathcal{Q}|=\prod_{i=1}^{M}|\mathcal{C}_i|$ . But this map is injective if $\sum_{k=0}^{r} \mathcal{A}_k \otimes \left(\mathcal{T}^T\right)^k$ is right invertible. The latter, however, is true if and only if $\sum_{k=0}^{r} \mathcal{A}_k \otimes \left(\mathcal{T}^T\right)^k$ has rank $mM$.
\end{proof}

By considering Theorem \ref{GenMat_dim} in the particular setting of  $\mathcal{A}_1=\cdots=\mathcal{A}_r=\mathbf{0}_{M\times N}$, the condition that $\sum_{k=0}^{r} \mathcal{A}_k \otimes \left(\mathcal{T}^T\right)^k$ has rank $mM$ simplifies to $\mathcal{A}_0$ having rank $M$. In fact, this is the result on the size of MP codes that was noticed in \cite{Blackmore2001}. Additionally, we remark that the condition of Theorem \ref{GenMat_dim} is sufficient but not necessary. This can be observed via Example \ref{ExampleEight}, where $|\mathcal{Q}|=\prod_{i=1}^{M}|\mathcal{C}_i|$ even though $\sum_{k=0}^{r} \mathcal{A}_k \otimes \left(\mathcal{T}^T\right)^k$ has rank less than $mM$. In contrast, $|\mathcal{Q}| \ne \prod_{i=1}^{M}|\mathcal{C}_i|$ in Example \ref{ExampleFive} while $\sum_{k=0}^{r} \mathcal{A}_k \otimes \left(\mathcal{T}^T\right)^k$ has rank less than $mM$.

\section{GMP codes with best-known parameters}
\label{BKPexamples}
In Section \ref{Sec4}, we will prove that QT codes have a GMP structure. However, QT codes are known to include many good codes. As a result, we expect to find many GMP codes with best-known parameters. In this section, many of these codes---with different parameters and over different fields---will be presented. To make these codes noteworthy, all of them are neither MP nor QT. 

\begin{example}
\label{ExampleThree}
Let $m=4$, $M=2$, $N=5$, and $r=1$. Consider the binary GMP code $\mathcal{Q}= \left[ \mathcal{C}_1 \ \mathcal{C}_2\right] \mathcal{A}_0+ \mathcal{T} \left[ \mathcal{C}_1 \  \mathcal{C}_2\right] \mathcal{A}_1$ of length $20$, where
\begin{equation*}
\mathcal{T}=\begin{bmatrix}
0 & 0 & 0 &1\\
0 & 1 & 1 &1\\
0 & 1 & 0 &1\\
0 & 0 & 1 &1
\end{bmatrix},
\quad
\mathcal{A}_0=\begin{bmatrix}
0 & 1 & 1 &1 &0\\
0 & 0 & 1 &0 &1
\end{bmatrix},
\quad
\mathcal{A}_1=\begin{bmatrix}
1 & 1 & 1 &1 &0\\
1 & 0 & 0 &1 &1
\end{bmatrix},
\end{equation*}
and $\mathcal{C}_1=\mathcal{C}_2$ is the binary linear code with generator matrix
\begin{equation*}
\mathbf{G}_1=\mathbf{G}_2=\begin{bmatrix}
1 & 0 & 0 &1\\
0 & 1 & 0 &1\\
0 & 0 & 1 &1
\end{bmatrix}.
\end{equation*}
The linearity of $\mathcal{C}_1$ and $\mathcal{C}_2$ implies the linearity of $\mathcal{Q}$. From Theorem \ref{GenMat}, $\mathrm{vec}\left(\mathcal{Q}\right)$ has the generator matrix
\setcounter{MaxMatrixCols}{20}\begin{equation*}\begin{split}
\mathbf{G}=\mathrm{diag}\left[\mathbf{G}_1,\mathbf{G}_1\right] \left(\mathcal{A}_0 \otimes \mathcal{I}_4 + \mathcal{A}_1 \otimes \mathcal{T}^T \right)
=\begin{bmatrix}
1&1&1&1&0&1&1&0&0&1&1&0&0&1&1&0&0&0&0&0\\
1&0&0&1&1&1&0&0&1&1&0&0&1&1&0&0&0&0&0&0\\
1&0&1&0&1&0&0&1&1&0&0&1&1&0&0&1&0&0&0&0\\
1&1&1&1&0&0&0&0&1&0&0&1&1&1&1&1&0&1&1&0\\
1&0&0&1&0&0&0&0&0&1&0&1&1&0&0&1&1&1&0&0\\
1&0&1&0&0&0&0&0&0&0&1&1&1&0&1&0&1&0&0&1
\end{bmatrix}.
\end{split}\end{equation*}
One can verify that $\mathrm{rank}\left(\mathcal{A}_0 \otimes \mathcal{I}_4 + \mathcal{A}_1 \otimes \mathcal{T}^T \right) =8=mM$, and hence Theorem \ref{GenMat_dim} implies that $\mathrm{dim}\left(\mathcal{Q}\right)=\mathrm{dim}\left(\mathcal{C}_1\right)+\mathrm{dim}\left(\mathcal{C}_2\right)=6$. The minimum distance $d\left(\mathcal{Q}\right)$ of $\mathcal{Q}$ is determined and found to be $d\left(\mathcal{Q}\right)=8$. According to \cite{codetables}, $\mathcal{Q}$ has the best-known parameters $[20,6,8]_2$.
\end{example}

\begin{example}
\label{ExampleFour}
Let $m=5$, $M=3$, $N=6$, $r=2$, and $\alpha\in\mathbb{F}_4$ such that $\alpha^2+\alpha+1=0$. Consider the GMP code $$\mathcal{Q}= \left[ \mathcal{C}_1 \ \mathcal{C}_2\ \mathcal{C}_3\right] \mathcal{A}_0+ \mathcal{T} \left[ \mathcal{C}_1 \  \mathcal{C}_2 \ \mathcal{C}_3\right] \mathcal{A}_1+ \mathcal{T}^2 \left[ \mathcal{C}_1 \  \mathcal{C}_2 \ \mathcal{C}_3\right] \mathcal{A}_2$$ over $\mathbb{F}_4$ of length $30$, where
\begin{align*}
\mathcal{T}&=\begin{bmatrix}
0&1&0&\alpha&\alpha^2\\
\alpha&1&1&0&0\\
\alpha&0&1&\alpha^2&0\\
1&1&\alpha^2&\alpha^2&\alpha\\
\alpha&\alpha^2&0&\alpha&\alpha
\end{bmatrix},
\quad
\mathcal{A}_0=\begin{bmatrix}
   0&  0&\alpha & 1&  0&  0\\
   1& \alpha& \alpha&\alpha&\alpha&  0\\
   1&  1&\alpha& \alpha^2&  1&  1
\end{bmatrix},\\
\mathcal{A}_1&=\begin{bmatrix}
 \alpha^2&\alpha&  1&\alpha & 1&\alpha^2\\
   1&\alpha & 0&  0&\alpha^2 & 1\\
 \alpha&\alpha & 1&  0 & 0&  1
\end{bmatrix},
\quad
\mathcal{A}_2=\begin{bmatrix}
   1&\alpha^2 &\alpha^2&\alpha&\alpha & 0\\
   1&\alpha & 0&  0&\alpha^2&\alpha^2\\
 \alpha&  1&  1&\alpha & 0&\alpha
\end{bmatrix},
\end{align*}
and $\mathcal{C}_1=\mathcal{C}_2=\mathcal{C}_3$ is the repetition code over $\mathbb{F}_4$ of length $5$. That is,
\begin{equation*}
\mathbf{G}_1=\mathbf{G}_2=\mathbf{G}_3=\begin{bmatrix}
1&1&1&1&1
\end{bmatrix}.
\end{equation*}
The linearity of $\mathcal{C}_1$, $\mathcal{C}_2$, and $\mathcal{C}_3$ implies the linearity of $\mathcal{Q}$. From Theorem \ref{GenMat}, $\mathrm{vec}\left(\mathcal{Q}\right)$ has the generator matrix
\begin{equation*}
\mathbf{G}=\mathrm{diag}\left[\mathbf{G}_1,\mathbf{G}_1,\mathbf{G}_1\right] \left(\mathcal{A}_0 \otimes \mathcal{I}_5 + \mathcal{A}_1 \otimes \mathcal{T}^T+ \mathcal{A}_2 \otimes \left(\mathcal{T}^T\right)^2 \right).
\end{equation*}
One can verify that $\mathrm{rank}\left(\mathcal{A}_0 \otimes \mathcal{I}_5 + \mathcal{A}_1 \otimes \mathcal{T}^T+ \mathcal{A}_2 \otimes \left(\mathcal{T}^T\right)^2 \right)=15=mM$, and hence Theorem \ref{GenMat_dim} implies that $\mathrm{dim}\left(\mathcal{Q}\right)=3 \mathrm{dim}\left(\mathcal{C}_1\right)=3$. The minimum distance of $\mathcal{Q}$ is $d\left(\mathcal{Q}\right)=22$. According to \cite{codetables}, $\mathcal{Q}$ has the best-known parameters $[30,3,22]_4$.
\end{example}

\begin{example}
\label{ExampleFive}
Let $m=M=4$, $N=3$, and $r=1$. Since $M>N$, Theorem \ref{GenMat_dim} is inapplicable. In other words, the rank of the $mM \times mN$ matrix $\sum_{k=0}^{r} \mathcal{A}_k \otimes \left(\mathcal{T}^T\right)^k$ is at most $mN<mM$, and we cannot ensure that $\mathrm{dim}\left(\mathcal{Q}\right)=\sum_{i=1}^M \mathrm{dim}\left(\mathcal{C}_i\right)$. Let $\mathcal{Q}$ be the GMP code $$\mathcal{Q}= \left[ \mathcal{C}_1 \ \mathcal{C}_2\ \mathcal{C}_3\ \mathcal{C}_4\right] \mathcal{A}_0+ \mathcal{T} \left[ \mathcal{C}_1 \  \mathcal{C}_2 \ \mathcal{C}_3 \ \mathcal{C}_4\right] \mathcal{A}_1$$ over $\mathbb{F}_3$ of length $12$, where
\begin{equation*}
\mathcal{T}=\begin{bmatrix}
2&2&0&0\\
0&2&2&1\\
0&1&0&0\\
2&2&1&2
\end{bmatrix},
\quad
\mathcal{A}_0=\begin{bmatrix}
2&2&2\\
0&2&2\\
0&1&2\\
2&0&2
\end{bmatrix},
\quad
\mathcal{A}_1=\begin{bmatrix}
1&2&2\\
0&0&0\\
0&2&2\\
0&1&1
\end{bmatrix},
\end{equation*}
$\mathcal{C}_1=\mathcal{C}_2$ is the linear code over $\mathbb{F}_3$ with generator matrix 
\begin{equation*}
\mathbf{G}_1=\mathbf{G}_2=\begin{bmatrix}
1&0&2&2\\
0&1&1&2
\end{bmatrix},
\end{equation*}
and $\mathcal{C}_3=\mathcal{C}_4$ is the linear code over $\mathbb{F}_3$ with generator matrix 
\begin{equation*}
\mathbf{G}_3=\mathbf{G}_4=\begin{bmatrix}
1&0&0&1\\
0&1&0&1\\
0&0&1&1
\end{bmatrix}.
\end{equation*}
The linearity of $\mathcal{C}_1$, $\mathcal{C}_2$, $\mathcal{C}_3$, and $\mathcal{C}_4$ implies the linearity of $\mathcal{Q}$. From Theorem \ref{GenMat}, $\mathrm{vec}\left(\mathcal{Q}\right)$ has the generator matrix
\begin{equation*}
\mathbf{G}=\mathrm{diag}\left[\mathbf{G}_1,\mathbf{G}_1,\mathbf{G}_3,\mathbf{G}_3\right] \left(\mathcal{A}_0 \otimes \mathcal{I}_4 + \mathcal{A}_1 \otimes \mathcal{T}^T\right).
\end{equation*}
The dimension of $\mathcal{Q}$ is $\mathrm{dim}\left(\mathcal{Q}\right)=9$, while its minimum distance is $d\left(\mathcal{Q}\right)=3$. According to \cite{codetables}, $\mathcal{Q}$ has the best-known parameters $[12,9,3]_3$.
\end{example}

\begin{example}
\label{ExampleSeven}
Let $m=M=3$, $N=4$, and $r=3$. Consider the GMP code $$\mathcal{Q}= \left[ \mathcal{C}_1 \ \mathcal{C}_2\ \mathcal{C}_3\right] \mathcal{A}_0+ \mathcal{T} \left[ \mathcal{C}_1 \  \mathcal{C}_2 \ \mathcal{C}_3\right] \mathcal{A}_1 + \mathcal{T}^2 \left[ \mathcal{C}_1 \  \mathcal{C}_2 \ \mathcal{C}_3\right] \mathcal{A}_2 + \mathcal{T}^3 \left[ \mathcal{C}_1 \  \mathcal{C}_2 \ \mathcal{C}_3\right] \mathcal{A}_3$$ over $\mathbb{F}_5$ of length $12$, where
\begin{equation*}\begin{split}
\mathcal{T}=\begin{bmatrix}
2&2&3\\
0&3&1\\
2&0&0
\end{bmatrix},
\quad
\mathcal{A}_0=\begin{bmatrix}
1&0&3&1\\
4&0&2&2\\
4&2&1&1
\end{bmatrix},
\quad
\mathcal{A}_1=\begin{bmatrix}
2&0&1&1\\
3&4&0&2\\
3&0&2&3
\end{bmatrix},\\
\mathcal{A}_2=\begin{bmatrix}
0&0&2&2\\
1&0&4&4\\
4&3&4&3
\end{bmatrix},
\quad
\mathcal{A}_3=\begin{bmatrix}
1&1&1&3\\
3&4&1&2\\
0&3&4&0
\end{bmatrix},
\end{split}\end{equation*}
$\mathcal{C}_1$ is the repetition code, i.e., $\mathbf{G}_1=\left[1\ 1\ 1 \right]$, and $\mathcal{C}_2=\mathcal{C}_3$ is the linear code with generator matrix 
\begin{equation*}
\mathbf{G}_2=\mathbf{G}_3=\begin{bmatrix}
1&0&4\\
0&1&4
\end{bmatrix}.
\end{equation*}
The linearity of $\mathcal{C}_1$, $\mathcal{C}_2$, and $\mathcal{C}_3$ implies the linearity of $\mathcal{Q}$. From Theorem \ref{GenMat}, $\mathrm{vec}\left(\mathcal{Q}\right)$ has the generator matrix
\begin{equation*}
\mathbf{G}=\mathrm{diag}\left[\mathbf{G}_1,\mathbf{G}_2,\mathbf{G}_2\right] \left(\mathcal{A}_0 \otimes \mathcal{I}_3 + \mathcal{A}_1 \otimes \mathcal{T}^T + \mathcal{A}_2 \otimes \left(\mathcal{T}^T\right)^2+ \mathcal{A}_3 \otimes \left(\mathcal{T}^T\right)^3\right).
\end{equation*}
Since $\mathrm{rank}\left(\mathcal{A}_0 \otimes \mathcal{I}_3 + \mathcal{A}_1 \otimes \mathcal{T}^T + \mathcal{A}_2 \otimes \left(\mathcal{T}^T\right)^2+ \mathcal{A}_3 \otimes \left(\mathcal{T}^T\right)^3\right)=9=mM$, Theorem \ref{GenMat_dim} implies that $\mathrm{dim}\left(\mathcal{Q}\right)=\mathrm{dim}\left(\mathcal{C}_1\right)+2\mathrm{dim}\left(\mathcal{C}_2\right)=5$. We found $d\left(\mathcal{Q}\right)=6$ and, hence, $\mathcal{Q}$ has the best-known parameters $[12,5,6]_5$ according to \cite{codetables}.
\end{example}

\begin{example}
\label{ExampleEight}
This example shows that the rank condition in Theorem \ref{GenMat_dim} is sufficient but not necessary. Specifically, we will conclude that $\mathrm{dim}\left(\mathcal{Q}\right)=\sum_{i=1}^M \mathrm{dim}\left(\mathcal{C}_i\right)$ even though $\mathrm{rank}\left(\sum_{k=0}^{r} \mathcal{A}_k \otimes \left(\mathcal{T}^T\right)^k \right) \ne mM$. Let $m=6$, $M=N=2$, $r=1$, and $\alpha\in\mathbb{F}_9$ such that $\alpha^2+2\alpha+2=0$. Consider the GMP code $$\mathcal{Q}= \left[ \mathcal{C}_1 \ \mathcal{C}_2 \right] \mathcal{A}_0+ \mathcal{T} \left[ \mathcal{C}_1 \  \mathcal{C}_2 \right] \mathcal{A}_1$$ over $\mathbb{F}_9$ of length $12$, where
\begin{equation*}\begin{split}
\mathcal{T}=\begin{bmatrix}
 \alpha&  2&\alpha^2&\alpha^5&  0&  1\\
 \alpha^7&  1&\alpha^3&\alpha^2&  2&  2\\
   0&\alpha^6&\alpha^5&  0&\alpha&\alpha^2\\
   1&\alpha^6&\alpha^3&  2&\alpha^2&  2\\
 \alpha^7&\alpha^3&\alpha&\alpha^5&\alpha^2&\alpha^5\\
 \alpha^3&  0&  0&\alpha^3&\alpha^2&  0
\end{bmatrix},
\quad
\mathcal{A}_0=\begin{bmatrix}
2& \alpha^2\\
 \alpha^6 &\alpha^7
\end{bmatrix},
\quad
\mathcal{A}_1=\begin{bmatrix}
\alpha^2 &\alpha^7\\
\alpha& \alpha^2
\end{bmatrix},
\end{split}\end{equation*}
$\mathcal{C}_1$ is the $[6,2,5]_9$-MDS code generated by 
\begin{equation*}
\mathbf{G}_1=\begin{bmatrix}
1& 0&\alpha^3&\alpha^3&  1&\alpha^6\\
0&  1&\alpha^7&\alpha^6&\alpha^5&\alpha^6
\end{bmatrix},
\end{equation*}
and $\mathcal{C}_2$ is the $[6,4,3]_9$-MDS code generated by 
\begin{equation*}
\mathbf{G}_2=\begin{bmatrix}
1&  0&  0&  0&\alpha^7&\alpha\\
0  &1&  0&  0&\alpha^3&  1\\
0&  0&  1&  0&\alpha&  2\\
0&  0&  0&  1&\alpha^5&\alpha^5
\end{bmatrix}.
\end{equation*}
From Theorem \ref{GenMat}, $\mathrm{vec}\left(\mathcal{Q}\right)$ has the generator matrix
\begin{equation*}
\mathbf{G}=\mathrm{diag}\left[\mathbf{G}_1,\mathbf{G}_2\right] \left(\mathcal{A}_0 \otimes \mathcal{I}_6 + \mathcal{A}_1 \otimes \mathcal{T}^T \right).
\end{equation*}
Although $\mathrm{rank}\left(\mathcal{A}_0 \otimes \mathcal{I}_6 + \mathcal{A}_1 \otimes \mathcal{T}^T \right)=11 < mM$, we found $\mathrm{dim}\left(\mathcal{Q}\right)=6=\mathrm{dim}\left(\mathcal{C}_1\right)+\mathrm{dim}\left(\mathcal{C}_2\right)$. Moreover, we found $d\left(\mathcal{Q}\right)=6$ and, hence, $\mathcal{Q}$ has the best-known parameters $[12,6,6]_9$ according to \cite{codetables}.
\end{example}

Referring to Example \ref{ExampleEight}, we note that $\mathcal{Q}$ achieves best-known parameters and is built from codes $\mathcal{C}_i$ that likewise have best-known parameters. However, as the following example will show, a GMP code $\mathcal{Q}$ with best-known parameters does not have to be built from codes $\mathcal{C}_i$ with best-known parameters.

\begin{example}
\label{ExampleNine}
Let $m=M=3$, $N=6$, $r=1$, and $\alpha\in\mathbb{F}_9$ such that $\alpha^2+2\alpha+2=0$. Consider the GMP code $$\mathcal{Q}= \left[ \mathcal{C}_1 \ \mathcal{C}_2 \ \mathcal{C}_3 \right] \mathcal{A}_0+ \mathcal{T} \left[ \mathcal{C}_1 \  \mathcal{C}_2 \ \mathcal{C}_3 \right] \mathcal{A}_1$$ over $\mathbb{F}_9$ of length $18$, where
\begin{equation*}\begin{split}
\mathcal{T}=\begin{bmatrix}
\alpha^6&  \alpha^2&\alpha^3\\
2&  2&\alpha^6\\
\alpha^3&\alpha^7&\alpha
\end{bmatrix},
\quad
\mathcal{A}_0=\begin{bmatrix}
0&\alpha^3&\alpha&  2&\alpha^6&  2\\
2&\alpha^7&  1&  0&\alpha^7&\alpha^5\\
\alpha&\alpha^3&  2&\alpha^5&\alpha&\alpha^5
\end{bmatrix},
\quad
\mathcal{A}_1=\begin{bmatrix}
1&\alpha^6&\alpha^7&\alpha^2&\alpha^2&  1\\
\alpha^7&\alpha^2&\alpha^5&  2&\alpha^6&  0\\
\alpha&\alpha^5&\alpha^7&  1&\alpha^2&\alpha^3
\end{bmatrix},
\end{split}\end{equation*}
$\mathcal{C}_1=\mathcal{C}_2$ is the $[3,1,2]_9$-code generated by 
\begin{equation*}
\mathbf{G}_1=\mathbf{G}_2=\begin{bmatrix}
0&1&\alpha^2
\end{bmatrix},
\end{equation*}
and $\mathcal{C}_3$ is the $[3,2,2]_9$ code generated by 
\begin{equation*}
\mathbf{G}_3=\begin{bmatrix}
1&0&2\\
0&1&\alpha
\end{bmatrix}.
\end{equation*}
From Theorem \ref{GenMat}, $\mathrm{vec}\left(\mathcal{Q}\right)$ has the generator matrix
\begin{equation*}
\mathbf{G}=\mathrm{diag}\left[\mathbf{G}_1,\mathbf{G}_1,\mathbf{G}_3\right] \left(\mathcal{A}_0 \otimes \mathcal{I}_3 + \mathcal{A}_1 \otimes \mathcal{T}^T \right).
\end{equation*}
Since $\mathrm{rank}\left(\mathcal{A}_0 \otimes \mathcal{I}_3 + \mathcal{A}_1 \otimes \mathcal{T}^T \right)=9= mM$, Theorem \ref{GenMat_dim} implies $\mathrm{dim}\left(\mathcal{Q}\right)=2\mathrm{dim}\left(\mathcal{C}_1\right)+\mathrm{dim}\left(\mathcal{C}_3\right)=4$. We found that $d\left(\mathcal{Q}\right)=13$. Hence, $\mathcal{Q}$ has the best-known parameters $[18,4,13]_9$ according to \cite{codetables}, even though $\mathcal{C}_1$ and $\mathcal{C}_2$ do not.
\end{example}

\section{QT codes have a GMP structure}
\label{Sec4}
In this section, we show how GMP codes combine MP and QT codes into a single class. This result will be demonstrated in Theorem \ref{QTareGMP}, in which we prove that any QT code can be expressed in a GMP form. Hence, we conclude that QT codes form a subclass of GMP codes. The following result is necessary to prove the main theorem.
\begin{lemma}
\label{lemma1}
Let $\mathcal{Q}=\sum_{k=0}^{r} \mathcal{T}^k \left[ \mathcal{C}_1 \ \cdots  \   \mathcal{C}_M\right] \mathcal{A}_k$ be a GMP code. If $\mathcal{C}_1, \mathcal{C}_2, \ldots,  \mathcal{C}_M$ are $\mathcal{T}$-invariant, then $\mathcal{Q}$ is also $\mathcal{T}$-invariant, i.e., $\mathcal{T}\mathbf{q}\in\mathcal{Q}$ for every $\mathbf{q}\in\mathcal{Q}$. 
\end{lemma}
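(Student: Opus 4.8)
The plan is to verify $\mathcal{T}$-invariance directly from Definition \ref{GMP_Def}: I would take an arbitrary codeword $\mathbf{q}\in\mathcal{Q}$, write it as $\mathbf{q}=\sum_{k=0}^{r}\mathcal{T}^k\left[\mathbf{c}_1\ \cdots\ \mathbf{c}_M\right]\mathcal{A}_k$ with each $\mathbf{c}_i\in\mathcal{C}_i$, apply $\mathcal{T}$, and then recognize $\mathcal{T}\mathbf{q}$ as a codeword of $\mathcal{Q}$ of exactly the same form but built from different inner codewords. Computing,
\[
\mathcal{T}\mathbf{q}=\sum_{k=0}^{r}\mathcal{T}^{k+1}\left[\mathbf{c}_1\ \cdots\ \mathbf{c}_M\right]\mathcal{A}_k=\sum_{k=0}^{r}\mathcal{T}^{k}\Bigl(\mathcal{T}\left[\mathbf{c}_1\ \cdots\ \mathbf{c}_M\right]\Bigr)\mathcal{A}_k,
\]
where I have only used that powers of the single matrix $\mathcal{T}$ commute.

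The two facts that finish the argument are: first, left multiplication by $\mathcal{T}$ acts column-wise, so $\mathcal{T}\left[\mathbf{c}_1\ \cdots\ \mathbf{c}_M\right]=\left[\mathcal{T}\mathbf{c}_1\ \cdots\ \mathcal{T}\mathbf{c}_M\right]$; and second---this is the only place the hypothesis enters---the $\mathcal{T}$-invariance of each $\mathcal{C}_i$ gives $\mathbf{c}_i':=\mathcal{T}\mathbf{c}_i\in\mathcal{C}_i$. Substituting, $\mathcal{T}\mathbf{q}=\sum_{k=0}^{r}\mathcal{T}^{k}\left[\mathbf{c}_1'\ \cdots\ \mathbf{c}_M'\right]\mathcal{A}_k$, which by Definition \ref{GMP_Def} lies in $\mathcal{Q}$. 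Since $\mathbf{q}$ was arbitrary, $\mathcal{Q}$ is $\mathcal{T}$-invariant.

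I do not expect a real obstacle; the only subtlety is the bookkeeping. One might be tempted to re-index $\mathcal{T}\mathbf{q}=\sum_{k=1}^{r+1}\mathcal{T}^{k}\left[\mathbf{c}_1\ \cdots\ \mathbf{c}_M\right]\mathcal{A}_{k-1}$ and then be stuck with the stray top term carrying the exponent $r+1$, which exceeds the allowed range $\{0,\dots,r\}$; rewriting $\mathcal{T}^{r+1}$ via Cayley--Hamilton would alter the coefficient matrices and hence the code. The clean move, as above, is to push the extra factor of $\mathcal{T}$ not into the exponent but into the codeword factors, leaving the powers $\mathcal{T}^k$ and the matrices $\mathcal{A}_k$ untouched---which is precisely what $\mathcal{T}$-invariance of the $\mathcal{C}_i$ permits.
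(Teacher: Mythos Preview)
Your proof is correct and is essentially identical to the paper's own argument: both take an arbitrary $\mathbf{q}=\sum_{k=0}^{r}\mathcal{T}^k[\mathbf{c}_1\ \cdots\ \mathbf{c}_M]\mathcal{A}_k$, commute the extra factor of $\mathcal{T}$ past $\mathcal{T}^k$ onto the block of codewords, and then invoke $\mathcal{T}$-invariance of each $\mathcal{C}_i$ to conclude $\mathcal{T}\mathbf{q}\in\mathcal{Q}$. Your additional remark about the tempting but unhelpful re-indexing to $\mathcal{T}^{r+1}$ is a nice pedagogical aside, but the core proof matches the paper line for line.
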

\begin{proof}
For every $m\times N$ matrix $\mathbf{q}\in\mathcal{Q}$, there exist codewords $\mathbf{c}_i\in\mathcal{C}_i$ such that
\begin{equation*}\begin{split}
\mathcal{T}\mathbf{q}=\mathcal{T}\sum_{k=0}^{r} \mathcal{T}^k \left[ \mathbf{c}_1 \ \cdots  \  \mathbf{c}_M\right] \mathcal{A}_k &=\sum_{k=0}^{r} \mathcal{T}^k \mathcal{T}\left[ \mathbf{c}_1 \ \cdots  \  \mathbf{c}_M\right] \mathcal{A}_k 
=\sum_{k=0}^{r} \mathcal{T}^k \left[ \mathcal{T}\!\mathbf{c}_1 \ \cdots  \  \mathcal{T}\!\mathbf{c}_M\right] \mathcal{A}_k .
\end{split}\end{equation*}
Since $\mathcal{C}_i$ is $\mathcal{T}$-invariant, $\mathcal{T}\mathbf{c}_i \in \mathcal{C}_i$ and, hence, $\mathcal{T}\mathbf{q}\in\mathcal{Q}$.
\end{proof}

From now on and till the end of this section, $\mathcal{Q}$ refers to a $\lambda$-QT code over $\mathbb{F}_q$ of index $\ell$ and co-index $m$. Equivalently, $\mathcal{Q}$ is a $\mathcal{T}_\lambda$-invariant subspace of $\mathbb{F}_q^{m\times \ell}$, where $\mathcal{T}_\lambda$ is the matrix given by \eqref{LambdaTransformation}. We are now ready to prove that any QT code has a GMP structure.

\begin{theorem}
\label{QTareGMP}
Let $\mathcal{T}_\lambda$ be the $m\times m$ matrix given by \eqref{LambdaTransformation}. A code $\mathcal{Q}$ is $\lambda$-QT of index $\ell$ and co-index $m$ if and only if $\mathcal{Q}=\sum_{k=0}^{m-1} \mathcal{T}_\lambda^k \left[ \mathcal{C}_1 \ \cdots  \   \mathcal{C}_M\right] \mathcal{A}_k$ for some $\lambda$-constacyclic codes $\mathcal{C}_1, \ldots, \mathcal{C}_M$ of length $m$ and some $M\times \ell$ matrices $\mathcal{A}_0, \mathcal{A}_1, \ldots, \mathcal{A}_{m-1}$.
\end{theorem}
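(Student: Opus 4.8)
The plan is to prove the two implications separately; the backward direction is an almost immediate consequence of Lemma \ref{lemma1}, while the forward direction requires unpacking the polynomial module structure of $\mathcal{Q}$ and repackaging it as a GMP sum.

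For the backward implication, suppose $\mathcal{Q}=\sum_{k=0}^{m-1}\mathcal{T}_\lambda^k[\mathcal{C}_1\ \cdots\ \mathcal{C}_M]\mathcal{A}_k$ with each $\mathcal{C}_i$ a $\lambda$-constacyclic code of length $m$. Each such $\mathcal{C}_i$ is by definition a $\mathcal{T}_\lambda$-invariant subspace of $\mathbb{F}_q^{m\times 1}$; in particular it is linear, so by the linearity observation preceding Theorem \ref{GenMat} the code $\mathcal{Q}$ is a subspace of $\mathbb{F}_q^{m\times\ell}$, and by Lemma \ref{lemma1} applied with $\mathcal{T}=\mathcal{T}_\lambda$ it is $\mathcal{T}_\lambda$-invariant. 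A $\mathcal{T}_\lambda$-invariant subspace of $\mathbb{F}_q^{m\times\ell}$ is exactly a $\lambda$-QT code of index $\ell$ and co-index $m$, which is what we want.

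For the forward implication, I would pass to the polynomial picture through the isomorphism $\varphi:\mathbb{F}_q^{m\times\ell}\to\mathcal{R}^\ell$ of \eqref{MapPhi}, under which $\varphi(\mathcal{Q})$ is an $\mathcal{R}$-submodule of $\mathcal{R}^\ell$. Fix a generating set of this submodule of size $M\le\ell$ and let $\mathfrak{G}=[g_{ij}(x)]$ be the corresponding $M\times\ell$ generator polynomial matrix, so that $\varphi(\mathcal{Q})=\{[a_1(x)\ \cdots\ a_M(x)]\,\mathfrak{G}\ :\ a_i(x)\in\mathcal{R}\}$. Writing each entry in its unique representative of degree $<m$ as $g_{ij}(x)=\sum_{k=0}^{m-1}g_{ij}^{(k)}x^k$ with $g_{ij}^{(k)}\in\mathbb{F}_q$, define the $M\times\ell$ matrices $\mathcal{A}_k=\big[g_{ij}^{(k)}\big]$ for $0\le k\le m-1$, and take $\mathcal{C}_1=\cdots=\mathcal{C}_M=\mathbb{F}_q^m$, which is a $\lambda$-constacyclic code since the whole space is trivially $\mathcal{T}_\lambda$-invariant. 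It then remains to verify $\sum_{k=0}^{m-1}\mathcal{T}_\lambda^k[\mathcal{C}_1\ \cdots\ \mathcal{C}_M]\mathcal{A}_k=\mathcal{Q}$: applying $\varphi$ column by column and using $\varphi(\mathcal{T}_\lambda\mathbf{c})=x\varphi(\mathbf{c})$ from \eqref{PhiProperty}, a generic element $\sum_k\mathcal{T}_\lambda^k[\mathbf{c}_1\ \cdots\ \mathbf{c}_M]\mathcal{A}_k$ is sent to $[\varphi(\mathbf{c}_1)\ \cdots\ \varphi(\mathbf{c}_M)]\,\mathfrak{G}$, because the $j$-th coordinate becomes $\sum_i\big(\sum_k g_{ij}^{(k)}x^k\big)\varphi(\mathbf{c}_i)=\sum_i g_{ij}(x)\varphi(\mathbf{c}_i)$; since $\varphi(\mathbf{c}_i)$ runs over all of $\mathcal{R}$ as $\mathbf{c}_i$ runs over $\mathbb{F}_q^{m\times 1}$, the image is precisely $\varphi(\mathcal{Q})$, hence the two codes coincide.

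I expect the bookkeeping in this last step to be the main obstacle: one has to carefully interchange the double sum over $i$ and $k$, recognize $\sum_{k=0}^{m-1}g_{ij}^{(k)}x^k$ as the element $g_{ij}(x)$ of $\mathcal{R}$, and argue that ``plugging the whole space into every $\mathcal{C}_i$ slot'' reconstructs exactly the $\mathcal{R}$-row span of $\mathfrak{G}$ rather than something larger or smaller, with the identity $\varphi(\mathcal{T}_\lambda^k\mathbf{c})=x^k\varphi(\mathbf{c})$ serving as the bridge. Two minor points worth stating explicitly are that $r=m-1$ precisely because representatives in $\mathcal{R}=\mathbb{F}_q[x]/\langle x^m-\lambda\rangle$ have degree at most $m-1$, and that the index $\ell$ of the QT code matches the column count $N=\ell$ of the $\mathcal{A}_k$ while the co-index $m$ matches the common length of the $\mathcal{C}_i$.
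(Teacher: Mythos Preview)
Your proof is correct, and the backward implication matches the paper's exactly. For the forward direction, however, you take a genuinely different and simpler route than the paper. The paper factors each row of the generator polynomial matrix $\mathfrak{G}$ by its greatest common divisor $g_i(x)$, writes $\mathfrak{G}=\mathrm{diag}[g_1,\ldots,g_M]\mathfrak{G}'$, sets $\mathcal{C}_i=\langle g_i(x)\rangle$, and decomposes $\mathfrak{G}'$ into the $\mathcal{A}_k$. You skip the gcd extraction entirely by taking every $\mathcal{C}_i=\mathbb{F}_q^m$ (trivially $\lambda$-constacyclic) and decomposing $\mathfrak{G}$ itself as $\sum_k x^k\mathcal{A}_k$. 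Your version is shorter and perfectly valid for the bare if-and-only-if statement. What the paper's construction buys is a GMP representation with potentially nontrivial constacyclic constituents $\mathcal{C}_i$; this is exactly the form exploited in the surrounding Examples~\ref{Ex1QTtoGMP} and~\ref{Ex2QTtoGMP}, and it is more informative for the later minimum-distance bounds, which depend on $d(\mathcal{C}_i)$ and become vacuous when every $\mathcal{C}_i$ is the whole space.
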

\begin{proof}
Assume that $\mathcal{Q}=\sum_{k=0}^{m-1} \mathcal{T}_\lambda^k \left[ \mathcal{C}_1 \ \cdots  \   \mathcal{C}_M\right] \mathcal{A}_k$ and suppose that $\mathcal{C}_1, \ldots, \mathcal{C}_M$ are $\lambda$-constacyclic. By Lemma \ref{lemma1},  $\mathcal{Q}$ is $\mathcal{T}_\lambda$-invariant, i.e., $\lambda$-QT. 

Conversely, assume that $\mathcal{Q}$ is $\lambda$-QT with an $M\times \ell$ generator polynomial matrix $\mathfrak{G}$. Let $g_i\left(x\right)$ be the greatest common divisor of all entries of the $i$-th row of $\mathfrak{G}$ for $1\le i\le M$. We set $\mathcal{C}_i=\langle g_i\left(x\right)\rangle$, the $\lambda$-constacyclic code of length $m$ generated by $g_i\left(x\right)$. Then $\mathfrak{G}$ can be written as
$$\mathfrak{G}=\mathrm{diag}\left[g_1\left(x\right), g_2\left(x\right), \ldots, g_M\left(x\right)\right]\mathfrak{G}',$$
where $\mathrm{Row}_i\left(\mathfrak{G}\right)=g_i\left(x\right) \mathrm{Row}_i\left(\mathfrak{G}'\right)$. In addition, we build the $M\times \ell$ matrices $\mathcal{A}_0, \mathcal{A}_1, \ldots, \mathcal{A}_{m-1}$ such that $\mathfrak{G}'=\sum_{k=0}^{m-1}x^k \mathcal{A}_k$. Now we show that $\mathcal{Q}= \sum_{k=0}^{m-1}\mathcal{T}_\lambda^k \left[\mathcal{C}_1 \ \cdots \ \mathcal{C}_M\right] \mathcal{A}_k$. Recall from \eqref{MapPhi} that $\varphi\left(\mathbf{q}\right)$ is the polynomial representation of any codeword $\mathbf{q}\in\mathcal{Q}$. Since $\mathfrak{G}$ generates $\mathcal{Q}$ as a submodule of $\mathcal{R}^\ell$, there exist $a_1(x), a_2(x), \ldots, a_M(x)\in\mathcal{R}$ such that $\varphi\left(\mathbf{q}\right) = \left[a_1(x)\  a_2(x)\  \cdots \  a_M(x)\right]\mathfrak{G}$. There exist codewords $\mathbf{c}_i\in\mathcal{C}_i$ such that $\varphi\left(\mathbf{c}_i\right)=a_i(x)g_i(x)$ for $1\le i\le M$. From \eqref{PhiProperty} we conclude that
\begin{equation*}
\begin{split}
\varphi\left(\mathbf{q}\right)&=\left[a_1(x) \  \cdots \  a_M(x) \right]\mathfrak{G}=\left[a_1(x) \  \cdots \  a_M(x) \right]\mathrm{diag}\left[g_1(x) , \ldots, g_M(x) \right]\mathfrak{G}'\\
&=\left[a_1(x)  g_1(x)  \  \cdots \  a_M(x)  g_M(x)  \right] \sum_{k=0}^{m-1}x^k \mathcal{A}_k 
=\sum_{k=0}^{m-1}x^k \left[\varphi\left(\mathbf{c}_1\right) \  \cdots \  \varphi\left(\mathbf{c}_M\right) \right]  \mathcal{A}_k \\
&=\sum_{k=0}^{m-1}x^k \varphi\left(\left[\mathbf{c}_1 \  \cdots \  \mathbf{c}_M \right]\right)  \mathcal{A}_k 
=\sum_{k=0}^{m-1} \varphi\left(\mathcal{T}_\lambda^k \left[\mathbf{c}_1 \  \cdots \   \mathbf{c}_M \right] \mathcal{A}_k \right)\\
&=\varphi\left(\sum_{k=0}^{m-1} \mathcal{T}_\lambda^k \left[\mathbf{c}_1 \  \mathbf{c}_2 \  \cdots \   \mathbf{c}_M \right] \mathcal{A}_k \right). 
\end{split}
\end{equation*}
Thus $\mathbf{q}=\sum_{k=0}^{m-1} \mathcal{T}_\lambda^k \left[\mathbf{c}_1 \  \cdots \   \mathbf{c}_M \right] \mathcal{A}_k \in \sum_{k=0}^{m-1}\mathcal{T}_\lambda^k \left[\mathcal{C}_1 \ \cdots \ \mathcal{C}_M\right] \mathcal{A}_k$. The other inclusion $\sum_{k=0}^{m-1}\mathcal{T}_\lambda^k \left[\mathcal{C}_1 \ \cdots \ \mathcal{C}_M\right] \mathcal{A}_k \subseteq \mathcal{Q}$ is obtained by reversing the previous equations.
\end{proof}

Theorem \ref{QTareGMP} not only proves that each QT code has a GMP form, but also provides a constructive approach to determine the matrices $\mathcal{A}_0, \mathcal{A}_1, \ldots, \mathcal{A}_{m-1}$ and the codes $\mathcal{C}_1, \mathcal{C}_2, \ldots, \mathcal{C}_M$ from any generator polynomial matrix. The approach will be employed in Examples \ref{Ex1QTtoGMP} and \ref{Ex2QTtoGMP}, and is summarized as follows:
\begin{enumerate}
\item Determine $\mathfrak{G}'$ such that $\mathfrak{G}=\mathrm{diag}\left[g_1(x), \ldots, g_M(x)\right]\mathfrak{G}'$, where $g_i(x)$ is the greatest common divisor of all entries of the $i$-th row of $\mathfrak{G}$ for $1\le i\le M$.
\item Define $\mathcal{C}_i$ to be the $\lambda$-constacyclic code generated by $g_i\left( x\right)$ for $1\le i \le M$.
\item Determine $\mathcal{A}_0, \ldots, \mathcal{A}_{m-1}$ such that $\mathfrak{G}'=\sum_{k=0}^{m-1}x^k \mathcal{A}_k$. 
\end{enumerate}

\begin{example}
\label{Ex1QTtoGMP}
Consider the binary GMP code $$\mathcal{Q}= \left[ \mathcal{C}_1 \ \mathcal{C}_2\right] \mathcal{A}_0+ \mathcal{T} \left[ \mathcal{C}_1 \  \mathcal{C}_2\right] \mathcal{A}_1 + \mathcal{T}^2 \left[ \mathcal{C}_1 \  \mathcal{C}_2\right] \mathcal{A}_2$$ of length $49$, where
\begin{equation*}
\mathcal{A}_0=\begin{bmatrix}
1 & 1 & 1 &1 &1 &1& 1\\
1 & 1 & 1 &0 &1 &0 &0
\end{bmatrix},
\quad
\mathcal{A}_1=\begin{bmatrix}
0 & 0 & 0 &0 &0 & 0 & 0\\
0 & 1 & 1 &1 &0 & 1 &0
\end{bmatrix},
\quad
\mathcal{A}_2=\begin{bmatrix}
0 & 0 & 0 &0 &0 & 0 & 0\\
1 & 1 & 0 &1 &0 & 0 & 1
\end{bmatrix},
\end{equation*}
$\mathcal{T}=\mathcal{T}_\lambda$ with $\lambda=1$, and $\mathcal{C}_1$ and $\mathcal{C}_2$ are the binary cyclic codes of length $m=7$ generated by $g_1\left(x \right)=1+x+x^2+x^4$ and $g_2\left(x \right)=1+x^2+x^3+x^4$ respectively. From Theorem \ref{QTareGMP}, we see that $\mathcal{Q}$ is quasi-cyclic of index $\ell=7$ and co-index $m=7$. But, by Theorem \ref{GenMat_dim}, $\mathrm{dim}\left(\mathcal{Q}\right)= \mathrm{dim}\left(\mathcal{C}_1\right)+ \mathrm{dim}\left(\mathcal{C}_2\right)=6$. We found that $d\left(\mathcal{Q}\right)=24$. Hence, $\mathcal{Q}$ is quasi-cyclic with the best-known parameters $[49,6,24]_2$ according to \cite{codetables}. Furthermore, Theorem \ref{QTareGMP} describes a generator polynomial matrix $\mathfrak{G}$ for $\mathcal{Q}$. Namely, by setting $\mathcal{A}_3=\cdots=\mathcal{A}_6=\mathbf{0}_{2\times 7}$, we get
\begin{equation*}\begin{split}
\mathfrak{G}&=\mathrm{diag}\left[g_1\left(x\right), g_2\left(x\right)\right] \mathfrak{G}'=\mathrm{diag}\left[g_1\left(x\right), g_2\left(x\right)\right] \left(\mathcal{A}_0+x \mathcal{A}_1+x^2 \mathcal{A}_2\right)\\
&=\mathrm{diag}\left[1+x+x^2+x^4, 1+x^2+x^3+x^4\right]
\begin{bmatrix}
1&1&1&1&1&1&1\\
1+x^2&1+x+x^2&1+x&x+x^2&1&x&x^2
\end{bmatrix}.
\end{split}\end{equation*}
\end{example}

\begin{example}
\label{Ex2QTtoGMP}
Let $\mathcal{Q}$ be the $2$-QT code over $\mathbb{F}_5$ of index $\ell=6$ and co-index $m=3$ generated by
\begin{equation*}
\mathfrak{G}=
\begin{bmatrix}
4+2x&3+3x&2+2x& 1+ x& 3+ x& 4+ x \\
2x+x^2&      0 & 4+4x+x^2 &4+2x&3+4x & 2+3x+x^2\\
3+x+2x^2& 1+2x+4x^2 & 0 & 1+2x+4x^2 & 3+x+2x^2  & 3+x+2x^2 
\end{bmatrix}.
\end{equation*}
According to \cite{codetables}, $\mathcal{Q}$ has the best-known parameters $[18,6,10]_5$. We use Theorem \ref{QTareGMP} to describe $\mathcal{Q}$ as a GMP code. That is, $\mathcal{Q}=\sum_{k=0}^{2} \mathcal{T}_2^k \left[ \mathcal{C}_1 \  \mathcal{C}_2 \ \mathcal{C}_3\right] \mathcal{A}_k$ for some $2$-constacyclic codes $\mathcal{C}_1, \mathcal{C}_2, \mathcal{C}_3$ over $\mathbb{F}_5$ of length $3$ and some $3\times 6$ matrices $\mathcal{A}_0, \mathcal{A}_1, \mathcal{A}_2$. We may rewrite $\mathfrak{G}=\mathrm{diag}\left[g_1\left(x\right), g_2\left(x\right),  g_3\left(x\right)\right]\mathfrak{G}'$, where $g_1\left(x\right)=1$, $g_2\left(x\right)=2+x$, $g_3\left(x\right)=4+3x+x^2$, and
\begin{equation*}\begin{split}
\mathfrak{G}'=
\begin{bmatrix}
4+2x &3+3x &2+2x&1+x& 3+x& 4+x\\
x&0&2+x &2&4&1+x\\
2 & 4 & 0 & 4 &2 & 2 
\end{bmatrix}&=\begin{bmatrix}
4&3&2&1&3&4\\
0&0&2&2&4&1\\
2 & 4 & 0 & 4 &2 & 2
\end{bmatrix}+x\begin{bmatrix}
2&3&2&1&1&1\\
1&0&1&0&0&1\\
0&0&0&0&0&0
\end{bmatrix}\\
&=\mathcal{A}_0+x\mathcal{A}_1.
\end{split}\end{equation*}
Define the $2$-constacyclic codes $\mathcal{C}_1=\langle 1\rangle=\mathbb{F}_5^3$, $\mathcal{C}_2=\langle 2+x\rangle$, and $\mathcal{C}_3=\langle 4+3x+x^2\rangle$ over $\mathbb{F}_5$ of length $3$. Therefore, $\mathcal{Q}$ has the GMP structure $\left[ \mathcal{C}_1 \ \mathcal{C}_2 \ \mathcal{C}_3 \right] \mathcal{A}_0+ \mathcal{T}_2 \left[ \mathcal{C}_1 \  \mathcal{C}_2 \ \mathcal{C}_3 \right] \mathcal{A}_1$.
\end{example}

We conclude this section by emphasizing that a generator polynomial matrix of a QT code is not always unique. Consequently, the same QT code can be expressed in multiple GMP structures. For instance, the following matrix is another generator polynomial matrix for the $2$-QT code $\mathcal{Q}$ introduced in Example \ref{Ex2QTtoGMP}. 
\begin{equation*}
\mathfrak{G}_1=
\begin{bmatrix}
1&0&3x+2x^2  &3+x+3x^2 & 4+3x+3x^2  & 3+4x+x^2 \\
0&1& 3+2x&3+4x+3x^2 &2+4x^2 &3x+3x^2 
\end{bmatrix}.
\end{equation*}
In fact, $\mathfrak{G}_1$ is the Hermite normal form generator polynomial matrix. By applying Theorem \ref{QTareGMP} to $\mathfrak{G}_1$ in a manner identical to Example \ref{Ex2QTtoGMP}, we obtain the following GMP structure for $\mathcal{Q}$.
$$\mathcal{Q}= \left[ \mathcal{C}_1 \ \mathcal{C}_2 \right] \mathcal{A}_0+ \mathcal{T}_2 \left[ \mathcal{C}_1 \  \mathcal{C}_2 \right] \mathcal{A}_1+ \mathcal{T}_2^2 \left[ \mathcal{C}_1 \  \mathcal{C}_2 \right] \mathcal{A}_2,$$
where $\mathcal{C}_1=\mathcal{C}_2=\mathbb{F}_5^3$, 
\begin{equation*}
\mathcal{A}_0=
\begin{bmatrix}
1&0&0 &3&4 &3\\
0&1&3&3&2&0
\end{bmatrix}, 
\mathcal{A}_1=\begin{bmatrix}
0&0&3&1&3 &4\\
0&0&2&4&0&3
\end{bmatrix}, \text{ and }
\mathcal{A}_2=\begin{bmatrix}
0&0&2&3&3&1\\
0&0&0&3&4&3
\end{bmatrix}.
\end{equation*}
It is not surprising that there are multiple GMP structures for the same code. This also applies to MP codes; we refer the reader to the discussion about row and column permutations in \cite{Blackmore2001}.

\section{Minimum distance lower bound}
\label{Sec5}
Recall that Equations \eqref{bound1} and \eqref{bound2} provide two lower bounds on the minimum distance of an MP code. In this section, we demonstrate two analogous lower bounds for GMP codes. Accordingly, we adjust the appropriate conditions on $\mathcal{A}_k$ necessary to apply these bounds. These bounds, which generalize \eqref{bound2} and \eqref{bound1}, respectively, are found in Theorems \ref{first_bound} and \ref{second_bound}. The condition in Theorem \ref{second_bound} is less restrictive than that of requiring NSC matrix as in the corresponding bound \eqref{bound1}. We will adopt the following notational convention throughout this section. The minimum distance of $\mathcal{C}_i$ and $\mathcal{Q}$ will be denoted by $d\left(\mathcal{C}_i\right)$ and $d\left(\mathcal{Q}\right)$ respectively. For any $M\times N$ matrix $\mathcal{A}$ and any $1\le t\le M$, the submatrix of $\mathcal{A}$ consisting of the first $t$ rows is denoted by $\mathcal{A}^{(t)}$. Additionally, we define $D_t$ as the minimum distance of the linear code over $\mathbb{F}_q$ of length $N$ spanned by the rows of the matrices $\mathcal{A}_0^{(t)}, \ldots,\mathcal{A}_r^{(t)}$. Let $j_1, j_2, \ldots, j_\tau$ be a sequence of integers such that $1\le j_1 <j_2 <\cdots <j_\tau \le N$. By $\mathcal{A}\left( j_1, j_2, \ldots, j_\tau\right)$ we mean the submatrix of $\mathcal{A}$ which consists of the columns $ j_1, j_2, \ldots, j_\tau$. Formula \eqref{bound2} for the lower bound on the minimum distance of MP codes is generalized in the following result to include GMP codes.
\begin{theorem}
\label{first_bound}
Let $\mathcal{Q}=\sum_{k=0}^{r} \mathcal{T}^k \left[ \mathcal{C}_1 \ \cdots  \   \mathcal{C}_M\right] \mathcal{A}_k$ be a GMP code. Assume that
\begin{equation}
\label{RankLB1}
\mathrm{rank}\left(\begin{bmatrix}
\mathcal{A}_0\\
\mathcal{A}_1\\
\vdots\\
\mathcal{A}_r
\end{bmatrix}\right)=M+\mathrm{rank}\left(\begin{bmatrix}
\mathcal{A}_1\\
\vdots\\
\mathcal{A}_r
\end{bmatrix}\right).
\end{equation}
Then
\begin{equation*}
d\left(\mathcal{Q}\right) \ge \min\left\{D_1 d\left(\mathcal{C}_1\right), D_2 d\left(\mathcal{C}_2\right), \ldots, D_M d\left(\mathcal{C}_M\right) \right\}.
\end{equation*}
\end{theorem}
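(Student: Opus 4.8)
The plan is to mimic the MP-code argument behind \eqref{bound2}, working with an arbitrary nonzero codeword $\mathbf{q}=\sum_{k=0}^{r}\mathcal{T}^k\left[\mathbf{c}_1\ \cdots\ \mathbf{c}_M\right]\mathcal{A}_k$ and estimating its weight column-block by column-block. First I would isolate the largest index $t$ with $\mathbf{c}_t\neq\mathbf{0}$, so that $\mathbf{c}_{t+1}=\cdots=\mathbf{c}_M=\mathbf{0}$, and $\mathbf{q}=\sum_{k=0}^{r}\mathcal{T}^k\left[\mathbf{c}_1\ \cdots\ \mathbf{c}_t\ \mathbf{0}\ \cdots\right]\mathcal{A}_k$. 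Only the first $t$ rows of each $\mathcal{A}_k$ matter, so $\mathbf{q}=\sum_{k=0}^{r}\mathcal{T}^k\left[\mathbf{c}_1\ \cdots\ \mathbf{c}_t\right]\mathcal{A}_k^{(t)}$. The goal is to show that the number of nonzero columns of $\mathbf{q}$ is at least $D_t$, and that every nonzero column has weight at least $d\left(\mathcal{C}_t\right)$; combining these gives $w(\mathbf{q})\ge D_t\, d\left(\mathcal{C}_t\right)\ge \min_i D_i\, d\left(\mathcal{C}_i\right)$.

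For the column count, I would pass to the polynomial / linear-algebra picture: the $j$-th column of $\mathbf{q}$ is $\sum_{k=0}^{r}\mathcal{T}^k\left[\mathbf{c}_1\ \cdots\ \mathbf{c}_t\right]\mathrm{Col}_j\!\left(\mathcal{A}_k^{(t)}\right)$. I want to argue that if column $j$ is zero then the $j$-th columns of $\mathcal{A}_0^{(t)},\ldots,\mathcal{A}_r^{(t)}$ jointly annihilate $\left[\mathbf{c}_1\ \cdots\ \mathbf{c}_t\right]$ against the $\mathcal{T}^k$, and hypothesis \eqref{RankLB1} is exactly what forces the surviving columns to form the support of a nonzero codeword in the length-$N$ code spanned by the rows of $\mathcal{A}_0^{(t)},\ldots,\mathcal{A}_r^{(t)}$ — which has minimum distance $D_t$ — so at least $D_t$ columns are nonzero. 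The rank condition \eqref{RankLB1} plays the role that ``$\mathrm{rank}(\mathcal{A})=M$'' plays in the MP case: it guarantees that the contribution of the top block $\mathcal{A}_0^{(t)}$ (tied to $\mathcal{T}^0=\mathcal{I}_m$, hence to the actual codewords $\mathbf{c}_i$ themselves rather than their $\mathcal{T}$-images) is linearly independent from all the $\mathcal{T}^k$-twisted contributions with $k\ge 1$, so that a genuinely nonzero $\mathbf{c}_t$ cannot be cancelled and produces a nonzero pattern of columns governed by the row span of the $\mathcal{A}_k^{(t)}$.

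For the per-column weight bound I would show each nonzero column of $\mathbf{q}$ lies (as an element of $\mathbb{F}_q^{m\times 1}$) in $\mathcal{C}_t$ — or more precisely in a code of minimum distance at least $d(\mathcal{C}_t)$; here the assumption built into the ambient setup that the $\mathcal{C}_i$ are $\mathcal{T}$-invariant (in the QT instantiation, $\lambda$-constacyclic, so $\mathcal{T}_\lambda$-invariant) is what I would lean on, via Lemma \ref{lemma1}, to see that each $\mathcal{T}^k\mathbf{c}_i$ still lies in $\mathcal{C}_i$ and hence each column, being an $\mathbb{F}_q$-combination of such vectors with the leading nonzero contribution coming from $\mathcal{C}_t$, has weight at least $d(\mathcal{C}_t)$.

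The main obstacle I anticipate is the column-count step: in the plain MP setting one directly writes the column of $\mathbf{q}$ as $\left[\mathbf{c}_1\ \cdots\ \mathbf{c}_t\right]$ times a column of $\mathcal{A}^{(t)}$ and reads off the support from the code spanned by $\mathcal{A}^{(t)}$, but with the $\mathcal{T}^k$ twists interleaved one must carefully track which linear combinations of $\mathbf{c}_i$ and $\mathcal{T}^k\mathbf{c}_i$ can conspire to vanish, and show that \eqref{RankLB1} rules out exactly the bad cancellations. I would handle this by vectorizing (as in Theorem \ref{GenMat}), so that the zero-column condition becomes a linear condition $\left[\mathbf{c}_1\ \cdots\ \mathbf{c}_t\right]\!\left(\sum_k \mathrm{Col}_j(\mathcal{A}_k^{(t)})\otimes(\mathcal{T}^T)^k\right)=0$, and then use \eqref{RankLB1} to decouple the $k=0$ part, reducing the support question to the ordinary minimum-distance statement for the code of the $\mathcal{A}_k^{(t)}$'s that defines $D_t$.
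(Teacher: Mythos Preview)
Your column-count half is salvageable, but the per-column weight half is wrong, and not just because of the extra hypothesis you slip in. You claim that each nonzero column of $\mathbf{q}$ lies in $\mathcal{C}_t$ (or at least in a code of minimum distance $\ge d(\mathcal{C}_t)$). Even granting $\mathcal{T}$-invariance of the $\mathcal{C}_i$, the $j$-th column of $\mathbf{q}$ is
\[
\sum_{k=0}^{r}\sum_{i=1}^{t}\mathrm{Ent}_{(i,j)}\!\left(\mathcal{A}_k\right)\,\mathcal{T}^k\mathbf{c}_i,
\]
which lies only in $\mathcal{C}_1+\cdots+\mathcal{C}_t$, not in $\mathcal{C}_t$. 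That sum can have minimum distance strictly smaller than $d(\mathcal{C}_t)$; e.g.\ already in the plain MP case $r=0$ with $\mathcal{C}_1=\mathbb{F}_q^m$, any column is just an arbitrary vector in $\mathbb{F}_q^m$, so a nonzero column can have weight $1$ regardless of $d(\mathcal{C}_t)$. Thus the product ``(number of nonzero columns)$\,\times\, d(\mathcal{C}_t)$'' is not a valid lower bound on $w(\mathbf{q})$. Separately, Theorem~\ref{first_bound} does \emph{not} assume the $\mathcal{C}_i$ are $\mathcal{T}$-invariant (see Definition~\ref{GMP_Def} and the examples in Section~\ref{BKPexamples}), so leaning on Lemma~\ref{lemma1} here is importing a hypothesis you do not have.

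The paper's proof runs the argument \emph{row-wise}, which sidesteps both problems. With $t$ chosen as you did, for each $h$ with $\mathrm{Ent}_h(\mathbf{c}_t)\neq 0$ one has
\[
\mathrm{Row}_h(\mathbf{q})=\left[\mathrm{Ent}_h(\mathbf{c}_1)\ \cdots\ \mathrm{Ent}_h(\mathbf{c}_t)\right]\mathcal{A}_0^{(t)}+\sum_{k=1}^{r}\mathrm{Row}_h\!\left(\mathcal{T}^k[\mathbf{c}_1\ \cdots\ \mathbf{c}_t]\right)\mathcal{A}_k^{(t)},
\]
which is visibly a codeword of the length-$N$ code spanned by the rows of $\mathcal{A}_0^{(t)},\ldots,\mathcal{A}_r^{(t)}$. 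The rank condition \eqref{RankLB1} is used exactly once, to show this codeword is nonzero (the $\mathcal{A}_0$-part cannot be cancelled by the $\mathcal{A}_k$-parts for $k\ge 1$), hence $\mathrm{Row}_h(\mathbf{q})$ has weight $\ge D_t$. Since there are at least $d(\mathcal{C}_t)$ such indices $h$, one gets $w(\mathbf{q})\ge D_t\, d(\mathcal{C}_t)$ with no invariance assumption and no claim about individual columns.
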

\begin{proof}
Consider a codeword $\mathbf{q}=\sum_{k=0}^{r} \mathcal{T}^k \left[ \mathbf{c}_1 \ \cdots  \  \mathbf{c}_M\right] \mathcal{A}_k$ of $\mathcal{Q}$. Choose $1\le t\le M$ as the largest integer such that $\mathbf{c}_t \ne \mathbf{0}_{m \times 1}$. Then
\begin{equation*}
\mathbf{q}=\sum_{k=0}^{r} \mathcal{T}^k \left[ \mathbf{c}_1 \ \cdots  \  \mathbf{c}_t\right] \mathcal{A}_k^{(t)}.
\end{equation*}
Let $1\le h \le m$ be chosen arbitrarily such that $\mathrm{Ent}_h \left(\mathbf{c}_t\right) \ne 0$. The $h$-th row of $\mathbf{q}$ is
\begin{equation}
\label{Row_h}
\begin{split}
\mathrm{Row}_h \left(\mathbf{q}\right)&=\sum_{k=0}^{r} \mathrm{Row}_h \left(\mathcal{T}^k \left[ \mathbf{c}_1 \ \cdots  \  \mathbf{c}_t\right]\right) \mathcal{A}_k^{(t)}\\
&=\mathrm{Row}_h \left( \left[ \mathbf{c}_1 \ \cdots  \  \mathbf{c}_t\right]\right) \mathcal{A}_0^{(t)}+\sum_{k=1}^{r} \mathrm{Row}_h \left(\mathcal{T}^k \left[ \mathbf{c}_1 \ \cdots  \  \mathbf{c}_t\right]\right) \mathcal{A}_k^{(t)}\\
&=\left[ \mathrm{Ent}_h \left(\mathbf{c}_1\right) \ \cdots  \  \mathrm{Ent}_h \left(\mathbf{c}_t \right)\right] \mathcal{A}_0^{(t)}+\sum_{k=1}^{r} \mathrm{Row}_h \left(\mathcal{T}^k \left[ \mathbf{c}_1 \ \cdots  \  \mathbf{c}_t\right]\right) \mathcal{A}_k^{(t)}.
\end{split}
\end{equation}
Furthermore, it follows from $\mathrm{Ent}_h \left(\mathbf{c}_t\right) \ne 0$ that $\left[ \mathrm{Ent}_h \left(\mathbf{c}_1\right) \ \cdots  \  \mathrm{Ent}_h \left(\mathbf{c}_t \right)\right] \mathcal{A}_0^{(t)} \ne \mathbf{0}$ since, otherwise, we would have $\mathrm{rank}\left( \mathcal{A}_0^{(t)}\right)<t$, and then $\mathrm{rank}\left( \mathcal{A}_0\right)<M$; however, based on \eqref{RankLB1}, this yields the contradiction
\begin{equation*}\begin{split}
M+\mathrm{rank}\left(\begin{bmatrix}
\mathcal{A}_1\\
\vdots\\
\mathcal{A}_r
\end{bmatrix}\right)&=\mathrm{rank}\left(\begin{bmatrix}
\mathcal{A}_0\\
\mathcal{A}_1\\
\vdots\\
\mathcal{A}_r
\end{bmatrix}\right)\\
&\le \mathrm{rank}\left(\mathcal{A}_0\right)+\mathrm{rank}\left(\begin{bmatrix}
\mathcal{A}_1\\
\vdots\\
\mathcal{A}_r
\end{bmatrix}\right)< M+\mathrm{rank}\left(\begin{bmatrix}
\mathcal{A}_1\\
\vdots\\
\mathcal{A}_r
\end{bmatrix}\right).
\end{split}\end{equation*}

Equation \eqref{Row_h} means that $\mathrm{Row}_h \left(\mathbf{q}\right)$ is a codeword in the linear code of length $N$ that is spanned by the rows of $\mathcal{A}_0^{(t)}, \ldots,\mathcal{A}_r^{(t)}$; this code has minimum distance $D_t$. Then $\mathrm{Row}_h \left(\mathbf{q}\right)$ has a weight of at least $D_t$ if we show that it is nonzero. Assume the contrary, that is,
\begin{equation*}
\left[ \mathrm{Ent}_h \left(\mathbf{c}_1\right) \ \cdots  \  \mathrm{Ent}_h \left(\mathbf{c}_t \right)\right] \mathcal{A}_0^{(t)}=- \sum_{k=1}^{r} \mathrm{Row}_h \left(\mathcal{T}^k \left[ \mathbf{c}_1 \ \cdots  \  \mathbf{c}_t\right]\right) \mathcal{A}_k^{(t)}\ne \mathbf{0}.
\end{equation*}
Then 
\begin{equation*}
\mathrm{rank}\left(\begin{bmatrix}
\mathcal{A}_0\\
\mathcal{A}_1\\
\vdots\\
\mathcal{A}_r
\end{bmatrix}\right)< \mathrm{rank}\left(\mathcal{A}_0\right)+\mathrm{rank}\left(\begin{bmatrix}
\mathcal{A}_1\\
\vdots\\
\mathcal{A}_r
\end{bmatrix}\right)\le M+\mathrm{rank}\left(\begin{bmatrix}
\mathcal{A}_1\\
\vdots\\
\mathcal{A}_r
\end{bmatrix}\right),
\end{equation*}
which once again contradicts \eqref{RankLB1}. We get to the conclusion that every nonzero entry of $\mathbf{c}_t$ has a corresponding row in $\mathbf{q}$ with a weight of at least $D_t$. But $\mathbf{c}_t$ contains at least $d\left(\mathcal{C}_t\right)$ nonzero entries, then the weight of $\mathbf{q}$ is at least $D_t d\left(\mathcal{C}_t\right)$. 
\end{proof}

Condition \eqref{RankLB1} of Theorem \ref{first_bound} in the special case of MP codes, e.g., $\mathcal{A}=\mathcal{A}_0$ and $\mathcal{A}_1, \ldots, \mathcal{A}_r=\mathbf{0}_{M\times N}$, is reduced to $\mathrm{rank}\left(\mathcal{A}\right)=M$. This shows how Theorem \ref{first_bound} expands the bound \eqref{bound2} to GMP codes. It is worth pointing out that the bound of Theorem \ref{first_bound} is tight, meaning that some GMP codes can achieve it. The following examples aim for showing this. 

\begin{example}
\label{Ex_bound1}
The $2\times 5$ matrices 
\begin{equation*}
\mathcal{A}_0=\begin{bmatrix}
0&1&1&1&1\\
1&0&1&1&1
\end{bmatrix}
\quad\text{and}\quad
\mathcal{A}_1=\begin{bmatrix}
1&1&0&1&1\\
1&0&0&1&0
\end{bmatrix}
\end{equation*}
over $\mathbb{F}_2$ satisfy Condition \eqref{RankLB1} of Theorem \ref{first_bound} because
\begin{equation*}
\mathrm{rank}\left(\begin{bmatrix}
\mathcal{A}_0\\
\mathcal{A}_1
\end{bmatrix}\right)=2+\mathrm{rank}\left(\mathcal{A}_1\right)=4.
\end{equation*}
$D_1$ and $D_2$ are defined as the minimum distance of the binary linear codes of length $5$ generated by the matrices
\begin{equation*}
\begin{bmatrix}
\mathrm{Row}_1\left(\mathcal{A}_0\right)\\
\mathrm{Row}_1\left(\mathcal{A}_1\right)
\end{bmatrix}=
\begin{bmatrix}
0&1&1&1&1\\
1&1&0&1&1
\end{bmatrix}
\quad\text{and}\quad
\begin{bmatrix}
\mathcal{A}_0\\
\mathcal{A}_1
\end{bmatrix}=
\begin{bmatrix}
0&1&1&1&1\\
1&0&1&1&1\\
1&1&0&1&1\\
1&0&0&1&0
\end{bmatrix}
\end{equation*}
respectively. Since $D_1=D_2=2$, Theorem \ref{first_bound} implies that $d\left(\mathcal{Q}\right) \ge 2 \min\left\{d\left(\mathcal{C}_1\right), d\left(\mathcal{C}_2\right) \right\}$ for any GMP code $\mathcal{Q}= \left[ \mathcal{C}_1 \ \mathcal{C}_2\right] \mathcal{A}_0 + \mathcal{T} \left[ \mathcal{C}_1\  \mathcal{C}_2\right] \mathcal{A}_1$. Now choose
\begin{equation*}
\mathbf{G}_1=
\begin{bmatrix}
1&0&0&1\\
0&1&0&1\\
0&0&1&1
\end{bmatrix}
\quad \text{and}\quad
\mathcal{T}=
\begin{bmatrix}
 0&1&0&0\\
 0&1&1&0\\
 1&1&0&0\\
 1&1&1&1
\end{bmatrix},
\end{equation*}
then define $\mathcal{C}_1=\mathcal{C}_2$ as the $[4,3,2]_2$ code generated by $\mathbf{G}_1$. From Theorem \ref{GenMat}, $\mathrm{vec}\left(\mathcal{Q}\right)$ has the generator matrix
\begin{equation*}\begin{split}
\mathbf{G}&=\mathrm{diag}\left[\mathbf{G}_1,\mathbf{G}_1\right] \left(\mathcal{A}_0 \otimes \mathcal{I}_4 + \mathcal{A}_1 \otimes \mathcal{T}^T \right)
=\begin{bmatrix}
0&0&1&0&1&0&1&1&1&0&0&1&1&0&1&1&1&0&1&1\\
1&1&1&0&1&0&1&1&0&1&0&1&1&0&1&1&1&0&1&1\\
0&1&0&0&0&1&1&1&0&0&1&1&0&1&1&1&0&1&1&1\\
1&0&1&1&0&0&0&0&1&0&0&1&1&0&1&1&1&0&0&1\\
1&0&1&1&0&0&0&0&0&1&0&1&1&0&1&1&0&1&0&1\\
0&1&1&1&0&0&0&0&0&0&1&1&0&1&1&1&0&0&1&1
\end{bmatrix}.
\end{split}\end{equation*}
The minimum distance of $\mathcal{Q}$ is $d\left(\mathcal{Q}\right)=4$, which coincides with the lower bound of Theorem \ref{first_bound}.
\end{example}

\begin{example}
\label{Ex_bound2}
The $3\times 5$ matrices 
\begin{equation*}
\mathcal{A}_0=\begin{bmatrix}
0&0&2&2&0\\
2&2&0&2&1\\
1&0&2&0&2
\end{bmatrix}
\quad\text{and}\quad
\mathcal{A}_1=\begin{bmatrix}
1&0&2&2&2\\
1&2&2&0&0\\
2&2&1&2&2
\end{bmatrix}
\end{equation*}
over $\mathbb{F}_3$ satisfy Condition \eqref{RankLB1} of Theorem \ref{first_bound} because
\begin{equation*}
\mathrm{rank}\left(\begin{bmatrix}
\mathcal{A}_0\\
\mathcal{A}_1
\end{bmatrix}\right)=3+\mathrm{rank}\left(\mathcal{A}_1\right)=5.
\end{equation*}
By computing the minimum distance of the linear codes over $\mathbb{F}_3$ of length $5$ generated by
\begin{equation*}
\begin{bmatrix}
\mathrm{Row}_1\left(\mathcal{A}_0\right)\\
\mathrm{Row}_1\left(\mathcal{A}_1\right)
\end{bmatrix}=
\begin{bmatrix}
0&0&2&2&0\\
1&0&2&2&2\\
\end{bmatrix}
\  , \ 
\begin{bmatrix}
\mathcal{A}_0^{(2)}\\
\mathcal{A}_1^{(2)}
\end{bmatrix}=
\begin{bmatrix}
0&0&2&2&0\\
2&2&0&2&1\\
1&0&2&2&2\\
1&2&2&0&0
\end{bmatrix}
\ , \ 
\begin{bmatrix}
\mathcal{A}_0\\
\mathcal{A}_1
\end{bmatrix}=
\begin{bmatrix}
0&0&2&2&0\\
2&2&0&2&1\\
1&0&2&0&2\\
1&0&2&2&2\\
1&2&2&0&0\\
2&2&1&2&2
\end{bmatrix},
\end{equation*}
we get $D_1=2$, $D_2=2$, and $D_3=1$, respectively. Theorem \ref{first_bound} implies that $d\left(\mathcal{Q}\right) \ge  \min\left\{2 d\left(\mathcal{C}_1\right), 2 d\left(\mathcal{C}_2\right), d\left(\mathcal{C}_3\right) \right\}$ for any GMP code $\mathcal{Q}= \left[ \mathcal{C}_1 \ \mathcal{C}_2 \ \mathcal{C}_3\right] \mathcal{A}_0 + \mathcal{T} \left[ \mathcal{C}_1 \ \mathcal{C}_2 \ \mathcal{C}_3\right] \mathcal{A}_1$. Now choose
\begin{equation*}
\mathbf{G}_1=
\begin{bmatrix}
1&0&1&0&2\\
0&1&0&1&2
\end{bmatrix}
\quad\text{and}\quad
\mathcal{T}=
\begin{bmatrix}
0&1&2&2&1\\
2&2&2&1&2\\
1&1&1&2&1\\
1&2&0&0&0\\
0&2&1&1&2
\end{bmatrix},
\end{equation*}
then define $\mathcal{C}_1=\mathcal{C}_2$ as the $[5,2,3]_3$ code generated by $\mathbf{G}_1$ and $\mathcal{C}_3$ as the repetition code of length $5$ over $\mathbb{F}_3$. We found that $d\left(\mathcal{Q}\right)=5=\min\left\{2d\left(\mathcal{C}_1\right), 2d\left(\mathcal{C}_2\right), d\left(\mathcal{C}_3\right) \right\}$, which indicates that the lower bound of Theorem \ref{first_bound} is tight.
\end{example}

Since the lower bound of Theorem \ref{first_bound} has the drawback of requiring calculation of the minimum distances $D_t$, we overcome this by introducing another lower bound. This bound is the analogous to the bound of MP codes introduced in \cite[Theorem 3.7]{Blackmore2001}, however it requires a less restrictive condition on $\mathcal{A}_0, \ldots, \mathcal{A}_r$ than employing NSC matrices.

\begin{theorem}
\label{second_bound}
Let $\mathcal{Q}=\sum_{k=0}^{r} \mathcal{T}^k \left[ \mathcal{C}_1 \ \cdots  \   \mathcal{C}_M\right] \mathcal{A}_k$ be a GMP code. Assume that
\begin{equation}
\label{RankLB2}
\mathrm{rank}\left(\begin{bmatrix}
\mathcal{A}_0\\
\mathcal{A}_1\\
\vdots\\
\mathcal{A}_r
\end{bmatrix}\right)=M+\mathrm{rank}\left(\begin{bmatrix}
\mathcal{A}_1\\
\vdots\\
\mathcal{A}_r
\end{bmatrix}\right). 
\end{equation}
For every $1\le t\le M$, let $\tau_t$ be the smallest positive integer such that 
\begin{equation}
\label{RankLB3}
\mathrm{rank}\left(\begin{bmatrix}
\mathcal{A}_0^{\left(t\right)}\left( j_1, j_2, \ldots, j_{\tau_t}\right)\\
\mathcal{A}_1^{\left(t\right)}\left( j_1, j_2, \ldots, j_{\tau_t}\right)\\
\vdots\\
\mathcal{A}_r^{\left(t\right)}\left( j_1, j_2, \ldots, j_{\tau_t}\right)
\end{bmatrix}\right)=1+\mathrm{rank}\left(\begin{bmatrix}
\mathcal{A}_0^{\left(t-1\right)}\left( j_1, j_2, \ldots, j_{\tau_t}\right)\\
\mathcal{A}_1^{\left(t\right)}\left( j_1, j_2, \ldots, j_{\tau_t}\right)\\
\vdots\\
\mathcal{A}_r^{\left(t\right)}\left( j_1, j_2, \ldots, j_{\tau_t}\right)
\end{bmatrix}\right)
\end{equation}
for all $1\le j_1 < j_2 < \cdots < j_{\tau_t} \le N$. Then
\begin{equation*}
d\left(\mathcal{Q}\right) \ge \min\left\{\left(N-\tau_1+1\right) d\left(\mathcal{C}_1\right), \left(N-\tau_2+1\right) d\left(\mathcal{C}_2\right), \ldots, \left(N-\tau_M+1\right) d\left(\mathcal{C}_M\right) \right\}.
\end{equation*}
\end{theorem}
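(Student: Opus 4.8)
The plan is to reuse the row-by-row strategy from the proof of Theorem~\ref{first_bound} (which itself follows \cite[Theorem~3.7]{Blackmore2001}), but to replace the estimate ``each relevant row is a word of a code of minimum distance $D_t$'' by a direct combinatorial bound on how many coordinates of such a row may vanish. I would start from a nonzero codeword $\mathbf{q}=\sum_{k=0}^{r}\mathcal{T}^k[\mathbf{c}_1\ \cdots\ \mathbf{c}_M]\mathcal{A}_k$, take $t$ to be the largest index with $\mathbf{c}_t\neq\mathbf{0}_{m\times1}$ (so that, exactly as in the proof of Theorem~\ref{first_bound}, $\mathbf{q}=\sum_{k=0}^{r}\mathcal{T}^k[\mathbf{c}_1\ \cdots\ \mathbf{c}_t]\mathcal{A}_k^{(t)}$), and fix any $h$ with $\mathrm{Ent}_h(\mathbf{c}_t)\neq0$. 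Setting $\mathbf{u}_k=\mathrm{Row}_h(\mathcal{T}^k)\,[\mathbf{c}_1\ \cdots\ \mathbf{c}_t]\in\mathbb{F}_q^{1\times t}$ for $0\le k\le r$, the same computation as in \eqref{Row_h} gives $\mathrm{Row}_h(\mathbf{q})=\sum_{k=0}^{r}\mathbf{u}_k\,\mathcal{A}_k^{(t)}$; the point to keep in mind is that $\mathbf{u}_0=[\mathrm{Ent}_h(\mathbf{c}_1)\ \cdots\ \mathrm{Ent}_h(\mathbf{c}_t)]$, whose $t$-th coordinate is the nonzero scalar $\mathrm{Ent}_h(\mathbf{c}_t)$.

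Before using \eqref{RankLB3} I would check that $\tau_t$ is well defined with $1\le\tau_t\le N$. For $\tau=N$ the only admissible sequence is $1<2<\cdots<N$, so the two matrices in \eqref{RankLB3} become the untruncated stacks; hypothesis \eqref{RankLB2} is equivalent to saying that the rows of $\mathcal{A}_0$ are linearly independent modulo the row span of the block $[\mathcal{A}_1;\cdots;\mathcal{A}_r]$, hence in particular $\mathrm{Row}_t(\mathcal{A}_0)$ does not lie in the span of $\{\mathrm{Row}_i(\mathcal{A}_0):i<t\}\cup\{\mathrm{Row}_i(\mathcal{A}_k):i\le t,\ 1\le k\le r\}$, which is precisely \eqref{RankLB3} at $\tau=N$. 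Thus the minimum defining $\tau_t$ exists, and $N-\tau_t+1\ge1$.

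The heart of the proof is the claim that the weight of $\mathrm{Row}_h(\mathbf{q})$ is at least $N-\tau_t+1$. I would argue by contradiction: if $\mathrm{Row}_h(\mathbf{q})$ vanished in coordinates $j_1<\cdots<j_{\tau_t}$, then restricting $\mathrm{Row}_h(\mathbf{q})=\sum_k\mathbf{u}_k\mathcal{A}_k^{(t)}$ to those columns would give a vanishing linear combination of the rows of the left-hand matrix of \eqref{RankLB3} in which $\mathrm{Row}_t\!\bigl(\mathcal{A}_0^{(t)}(j_1,\ldots,j_{\tau_t})\bigr)$ occurs with coefficient $\mathrm{Ent}_h(\mathbf{c}_t)\neq0$; solving for that row places it in the row span of $\{\mathrm{Row}_i(\mathcal{A}_0^{(t)}(j_1,\ldots,j_{\tau_t})):i<t\}\cup\{\mathrm{Row}_i(\mathcal{A}_k^{(t)}(j_1,\ldots,j_{\tau_t})):i\le t,\ 1\le k\le r\}$, i.e.\ the two ranks in \eqref{RankLB3} would coincide for this choice of columns, contradicting the defining property of $\tau_t$. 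Hence $\mathrm{Row}_h(\mathbf{q})$ has at most $\tau_t-1$ zero coordinates, so weight at least $N-\tau_t+1$. Since $\mathbf{c}_t$ is a nonzero codeword of $\mathcal{C}_t$, it has at least $d(\mathcal{C}_t)$ nonzero entries, giving at least $d(\mathcal{C}_t)$ pairwise distinct rows of $\mathbf{q}$ each of weight at least $N-\tau_t+1$; therefore the weight of $\mathbf{q}$ is at least $(N-\tau_t+1)\,d(\mathcal{C}_t)\ge\min_{1\le s\le M}(N-\tau_s+1)\,d(\mathcal{C}_s)$, and minimizing over nonzero codewords yields the stated bound on $d(\mathcal{Q})$.

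The step I expect to demand the most care is the bookkeeping in that contradiction: one must verify that in the induced dependence the row $\mathrm{Row}_t(\mathcal{A}_0^{(t)})$ carries exactly the coefficient $\mathrm{Ent}_h(\mathbf{c}_t)$ — the fact that $\mathbf{c}_t$ also enters each $\mathbf{u}_k$ with $k\ge1$ is harmless, because those contributions sit on rows of the $\mathcal{A}_k$ blocks rather than on the $\mathcal{A}_0^{(t)}$ block — and that condition \eqref{RankLB3}, which is assumed for \emph{all} $1\le j_1<\cdots<j_{\tau_t}\le N$, is exactly what licenses applying it to the specific zero pattern of $\mathrm{Row}_h(\mathbf{q})$. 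Verifying that $\tau_t$ exists is a small but essential preliminary, and the remaining manipulations parallel those already done for Theorem~\ref{first_bound}.
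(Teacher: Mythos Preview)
Your proposal is correct and follows essentially the same route as the paper: pick the largest $t$ with $\mathbf{c}_t\neq\mathbf{0}$, expand $\mathrm{Row}_h(\mathbf{q})$ as in \eqref{Row_h}, and derive a contradiction with \eqref{RankLB3} from the assumption that this row vanishes in $\tau_t$ coordinates, concluding that each such row has weight at least $N-\tau_t+1$. The only noteworthy addition is your explicit verification that $\tau_t$ is well defined (i.e.\ that $\tau=N$ always satisfies \eqref{RankLB3} thanks to \eqref{RankLB2}), a point the paper leaves implicit.
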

\begin{proof}
Starting as in the proof of Theorem \ref{first_bound} and defining $\mathbf{q}$, $t$, and $h$ as they were defined previously, we get  
\begin{equation}
\label{InProof1}
\begin{split}
\mathrm{Row}_h \left(\mathbf{q}\right) = \mathrm{Ent}_h\left(\mathbf{c}_t\right)\mathrm{Row}_t\left(\mathcal{A}_0\right)+\left[ \mathrm{Ent}_h \left(\mathbf{c}_1\right) \ \cdots  \  \mathrm{Ent}_h \left(\mathbf{c}_{t-1} \right)\right] \mathcal{A}_0^{(t-1)}\\
 +\sum_{k=1}^{r} \mathrm{Row}_h \left(\mathcal{T}^k \left[ \mathbf{c}_1 \ \cdots  \  \mathbf{c}_t\right]\right) \mathcal{A}_k^{(t)}.
\end{split}\end{equation}
We prove by contradiction that $\mathrm{Row}_h \left(\mathbf{q}\right)$ cannot contain $\tau_t$ zeros, and therefore its weight is at least $(N-\tau_t+1)$. Assume that $\mathrm{Row}_h \left(\mathbf{q}\right)$ has $\tau_t$ zeros at the coordinates $j_1, j_2, \ldots, j_{\tau_t}$. Equation \eqref{InProof1} shows that
\begin{equation*}\begin{split}
\mathrm{Row}_t\left(\mathcal{A}_0\right)\left( j_1, \ldots, j_{\tau_t}\right)=\frac{-1}{\mathrm{Ent}_h\left(\mathbf{c}_t\right)}\Big( \left[ \mathrm{Ent}_h \left(\mathbf{c}_1\right) \ \cdots  \  \mathrm{Ent}_h \left(\mathbf{c}_{t-1} \right)\right] \mathcal{A}_0^{(t-1)}\left( j_1, \ldots, j_{\tau_t}\right)\\ + \sum_{k=1}^{r} \mathrm{Row}_h \left(\mathcal{T}^k \left[ \mathbf{c}_1 \ \cdots  \  \mathbf{c}_t\right]\right)  \mathcal{A}_k^{(t)}\left( j_1, \ldots, j_{\tau_t}\right)\Big).
\end{split}\end{equation*}
In fact, this contradicts \eqref{RankLB3} because it means that $\mathrm{Row}_t\left(\mathcal{A}_0\right)\left( j_1, \ldots, j_{\tau_t}\right)$ is in the row span of 
\begin{equation*}
\begin{bmatrix}
\mathcal{A}_0^{\left(t-1\right)}\left( j_1, \ldots, j_{\tau_t}\right)\\
\mathcal{A}_1^{\left(t\right)}\left( j_1, \ldots, j_{\tau_t}\right)\\
\vdots\\
\mathcal{A}_r^{\left(t\right)}\left( j_1, \ldots, j_{\tau_t}\right)
\end{bmatrix}.
\end{equation*}
We get to the conclusion that every nonzero entry of $\mathbf{c}_t$ has a corresponding row in $\mathbf{q}$ with a weight of at least $(N-\tau_t+1)$. But $\mathbf{c}_t$ has at least $d\left(\mathcal{C}_t\right)$ nonzero entries, then the weight of $\mathbf{q}$ is at least $(N-\tau_t+1) d\left(\mathcal{C}_t\right)$.
\end{proof}

It is worth noting that Theorem \ref{second_bound} has several advantages. The first is providing a lower bound similar to \eqref{bound1}, but for the broader class of GMP codes. We note that when applying Theorem \ref{second_bound} to an MP code (i.e., $\mathcal{A}_1, \ldots, \mathcal{A}_r=\mathbf{0}$) with an NSC $\mathcal{A}=\mathcal{A}_0$, Equation \eqref{RankLB3} is satisfied for $\tau_t=t$, and hence, its lower bound is exactly that in \eqref{bound1}. The second advantage is that Theorem \ref{second_bound} requires a less restrictive condition than NSC matrices. For example, consider an MP code $\mathcal{Q}=\left[ \mathcal{C}_1 \  \cdots \ \mathcal{C}_4\right] \mathcal{A}$ over $\mathbb{F}_5$ with
\begin{equation*}
\mathcal{A}=\begin{bmatrix}
 1&1&3&0&1&2&3\\
 3&1&0&0&1&1&3\\
 3&1&0&4&0&1&0\\
 1&3&4&2&4&3&4
\end{bmatrix}.
\end{equation*}
Because $\mathcal{A}$ is not NSC, the lower bound suggested in \eqref{bound1} cannot be applied. However, the lower bound of Theorem \ref{second_bound} can be applied, since Equation \eqref{RankLB3} is satisfied by $\tau_1=2$, $\tau_2=5$, $\tau_3=5$, and $\tau_4=4$. From this, we can immediately conclude 
$$d\left(\mathcal{Q}\right) \ge \min\left\{6 d\left(\mathcal{C}_1\right), 3 d\left(\mathcal{C}_2\right), 3 d\left(\mathcal{C}_3\right), 4 d\left(\mathcal{C}_4\right) \right\}.$$
However, to apply Theorem \ref{first_bound} to the aforementioned matrix $\mathcal{A}$, we find $D_1=6$ and $D_2=D_3=D_4=3$, and hence
$$d\left(\mathcal{Q}\right) \ge \min\left\{6 d\left(\mathcal{C}_1\right), 3 d\left(\mathcal{C}_2\right), 3 d\left(\mathcal{C}_3\right), 3 d\left(\mathcal{C}_4\right) \right\}.$$
Two other advantages for the lower bound of Theorem \ref{second_bound} are that it is a tight bound and it outperforms the bound of Theorem \ref{first_bound} in some cases. This is what we will clarify in the following examples.

\begin{example}
\label{ExampleTwo}
The $2\times 9$ matrices 
\begin{equation*}\begin{split}
\mathcal{A}_0&=\begin{bmatrix}
1&0&0&0&1&1&1&1&1\\
0&0&0&0&1&1&0&0&1
\end{bmatrix}
\ \text{,}\ 
\mathcal{A}_1=\begin{bmatrix}
0&1&0&0&1&0&0&0&0\\
1&0&1&1&0&1&1&1&1
\end{bmatrix}\ ,\\
\mathcal{A}_2&=\begin{bmatrix}
1&1&0&0&0&0&0&0&0\\
0&1&1&0&1&0&0&1&1
\end{bmatrix}
\ \text{,}\ 
\mathcal{A}_3=\begin{bmatrix}
0&0&1&0&0&0&0&1&1\\
1&0&1&1&0&1&1&1&1
\end{bmatrix}
\end{split}\end{equation*}
over $\mathbb{F}_2$ satisfy Condition \eqref{RankLB2} of Theorem \ref{second_bound} because
\begin{equation*}
\mathrm{rank}\left(\begin{bmatrix}
\mathcal{A}_0\\
\mathcal{A}_1\\
\mathcal{A}_2\\
\mathcal{A}_3
\end{bmatrix}\right)=2+\mathrm{rank}\left(\begin{bmatrix}
\mathcal{A}_1\\
\mathcal{A}_2\\
\mathcal{A}_3
\end{bmatrix}\right)=6.
\end{equation*}
We aim first to compare the lower bounds of Theorems \ref{first_bound} and \ref{second_bound}. By computing the minimum distance of the binary linear codes of length $9$ generated by 
\begin{equation*}
\begin{bmatrix}
\mathrm{Row}_1\left(\mathcal{A}_0\right)\\
\mathrm{Row}_1\left(\mathcal{A}_1\right)\\
\mathrm{Row}_1\left(\mathcal{A}_2\right)\\
\mathrm{Row}_1\left(\mathcal{A}_3\right)
\end{bmatrix}
\quad\text{ and }\quad 
\begin{bmatrix}
\mathcal{A}_0\\
\mathcal{A}_1\\
\mathcal{A}_2\\
\mathcal{A}_3
\end{bmatrix},
\end{equation*}
we find $D_1=D_2=2$. Theorem \ref{first_bound} implies that $d\left(\mathcal{Q}\right) \ge 2 \min\left\{d\left(\mathcal{C}_1\right), d\left(\mathcal{C}_2\right)\right\}$ for any GMP code $\mathcal{Q}= \left[ \mathcal{C}_1 \ \mathcal{C}_2 \right] \mathcal{A}_0 + \mathcal{T} \left[ \mathcal{C}_1 \ \mathcal{C}_2 \right] \mathcal{A}_1 + \mathcal{T}^2 \left[ \mathcal{C}_1 \ \mathcal{C}_2 \right] \mathcal{A}_2 + \mathcal{T}^3 \left[ \mathcal{C}_1 \ \mathcal{C}_2 \right] \mathcal{A}_3$. On the other hand, the lower bound of Theorem \ref{second_bound} requires evaluating $\tau_1$ and $\tau_2$. We find that $\tau_1=\tau_2=7$ are the least integers that satisfy 
\begin{equation*}
\mathrm{rank}\left(\begin{bmatrix}
\mathrm{Row}_1\left(\mathcal{A}_0\right)\left( j_1, j_2, \ldots, j_{\tau_1}\right)\\
\mathrm{Row}_1\left(\mathcal{A}_1\right)\left( j_1, j_2, \ldots, j_{\tau_1}\right)\\
\mathrm{Row}_1\left(\mathcal{A}_2\right)\left( j_1, j_2, \ldots, j_{\tau_1}\right)\\
\mathrm{Row}_1\left(\mathcal{A}_3\right)\left( j_1, j_2, \ldots, j_{\tau_1}\right)
\end{bmatrix}\right)=1+\mathrm{rank}\left(\begin{bmatrix}
\mathrm{Row}_1\left(\mathcal{A}_1\right)\left( j_1, j_2, \ldots, j_{\tau_1}\right)\\
\mathrm{Row}_1\left(\mathcal{A}_2\right)\left( j_1, j_2, \ldots, j_{\tau_1}\right)\\
\mathrm{Row}_1\left(\mathcal{A}_3\right)\left( j_1, j_2, \ldots, j_{\tau_1}\right)
\end{bmatrix}\right)
\end{equation*}
and
\begin{equation*}
\mathrm{rank}\left(\begin{bmatrix}
\mathcal{A}_0\left( j_1, j_2, \ldots, j_{\tau_2}\right)\\
\mathcal{A}_1\left( j_1, j_2, \ldots, j_{\tau_2}\right)\\
\mathcal{A}_2\left( j_1, j_2, \ldots, j_{\tau_2}\right)\\
\mathcal{A}_3\left( j_1, j_2, \ldots, j_{\tau_2}\right)
\end{bmatrix}\right)=1+\mathrm{rank}\left(\begin{bmatrix}
\mathrm{Row}_1\left(\mathcal{A}_0\right)\left( j_1, j_2, \ldots, j_{\tau_2}\right)\\
\mathcal{A}_1\left( j_1, j_2, \ldots, j_{\tau_2}\right)\\
\mathcal{A}_2\left( j_1, j_2, \ldots, j_{\tau_2}\right)\\
\mathcal{A}_3\left( j_1, j_2, \ldots, j_{\tau_2}\right)
\end{bmatrix}\right)
\end{equation*}
for all different choices $1\le j_1 < j_2 < \cdots < j_{\tau_1} \le 9$ and $1\le j_1 < j_2 < \cdots < j_{\tau_2} \le 9$ respectively. Then Theorem \ref{second_bound} implies that $d\left(\mathcal{Q}\right) \ge 3 \min\left\{d\left(\mathcal{C}_1\right), d\left(\mathcal{C}_2\right)\right\}$. Now, choose $\mathcal{C}_1=\mathcal{C}_2$ as the $[10,5,4]_2$ code generated by
\begin{equation*}
\mathbf{G}_1=
\begin{bmatrix}
1&0&0&1&0&1&0&1&1&1\\
0&1&0&1&0&1&0&1&0&0\\
0&0&1&1&0&0&0&0&1&1\\
0&0&0&0&1&1&0&0&1&1\\
0&0&0&0&0&0&1&1&1&1
\end{bmatrix}
\end{equation*}
and let $\mathcal{T}$ be the backward identity matrix of size $10$, the resulting GMP code $\mathcal{Q}$ has $d\left(\mathcal{Q}\right)=12=3 \min\left\{d\left(\mathcal{C}_1\right), d\left(\mathcal{C}_2\right)\right\}$. Therefore, while the bound of Theorem \ref{second_bound} is tight, the bound of Theorem \ref{first_bound} is not.
\end{example}

\begin{example}
\label{TightNotBinary}
The $3\times 8$ matrices 
\begin{equation*}
\mathcal{A}_0=\begin{bmatrix}
6&6&4&1&5&6&4&4\\
4&2&0&6&2&4&6&2\\
0&3&3&2&6&1&0&0
\end{bmatrix}
\quad\text{and}\quad
\mathcal{A}_1=\begin{bmatrix}
3&6&5&5&4&4&1&3\\
4&5&4&1&3&0&0&5\\
2&4&0&6&4&0&0&2
\end{bmatrix}
\end{equation*}
over $\mathbb{F}_7$ satisfy Condition \eqref{RankLB2} of Theorem \ref{second_bound}. The lower bound of Theorem \ref{first_bound} requires the minimum distances of the linear codes over $\mathbb{F}_7$ of length $8$ generated by 
\begin{equation*}
\begin{bmatrix}
\mathcal{A}_0^{(1)}\\
\mathcal{A}_1^{(1)}
\end{bmatrix}
\ ,\ 
\begin{bmatrix}
\mathcal{A}_0^{(2)}\\
\mathcal{A}_1^{(2)}
\end{bmatrix}
\ , \text{ and } \ 
\begin{bmatrix}
\mathcal{A}_0\\
\mathcal{A}_1
\end{bmatrix},
\end{equation*}
these are $D_1=6$, $D_2=3$, and $D_3=2$, respectively. Theorem \ref{first_bound} implies that $d\left(\mathcal{Q}\right) \ge  \min\left\{6d\left(\mathcal{C}_1\right), 3d\left(\mathcal{C}_2\right), 2d\left(\mathcal{C}_3\right)\right\}$ for any GMP code $\mathcal{Q}= \left[ \mathcal{C}_1 \ \mathcal{C}_2 \ \mathcal{C}_3 \right] \mathcal{A}_0 + \mathcal{T} \left[ \mathcal{C}_1 \ \mathcal{C}_2 \ \mathcal{C}_3 \right] \mathcal{A}_1$. On the other hand, the lower bound of Theorem \ref{second_bound} requires the least positive integers $\tau_1$, $\tau_2$, and $\tau_3$ such that
\begin{equation*}\begin{split}
\mathrm{rank}\left(\begin{bmatrix}
\mathrm{Row}_1\left(\mathcal{A}_0\right)\left( j_1, j_2, \ldots, j_{\tau_1}\right)\\
\mathrm{Row}_1\left(\mathcal{A}_1\right)\left( j_1, j_2, \ldots, j_{\tau_1}\right)
\end{bmatrix}\right)&=1+\mathrm{rank}\left(
\mathrm{Row}_1\left(\mathcal{A}_1\right)\left( j_1, j_2, \ldots, j_{\tau_1}\right)
\right),\\
\mathrm{rank}\left(\begin{bmatrix}
\mathcal{A}_0^{(2)}\left( j_1, j_2, \ldots, j_{\tau_2}\right)\\
\mathcal{A}_1^{(2)}\left( j_1, j_2, \ldots, j_{\tau_2}\right)
\end{bmatrix}\right)&=1+\mathrm{rank}\left(\begin{bmatrix}
\mathrm{Row}_1\left(\mathcal{A}_0\right)\left( j_1, j_2, \ldots, j_{\tau_2}\right)\\
\mathcal{A}_1^{(2)}\left( j_1, j_2, \ldots, j_{\tau_2}\right)
\end{bmatrix}\right),\\
\mathrm{rank}\left(\begin{bmatrix}
\mathcal{A}_0\left( j_1, j_2, \ldots, j_{\tau_3}\right)\\
\mathcal{A}_1\left( j_1, j_2, \ldots, j_{\tau_3}\right)
\end{bmatrix}\right)&=1+\mathrm{rank}\left(\begin{bmatrix}
\mathcal{A}_0^{(2)}\left( j_1, j_2, \ldots, j_{\tau_3}\right)\\
\mathcal{A}_1\left( j_1, j_2, \ldots, j_{\tau_3}\right)
\end{bmatrix}\right),
\end{split}
\end{equation*}
for all different choices of matrix columns. We find that $\tau_1=3$ and $\tau_2=\tau_3=6$. Then Theorem \ref{second_bound} implies that $d\left(\mathcal{Q}\right) \ge  \min\left\{6 d\left(\mathcal{C}_1\right), 3d\left(\mathcal{C}_2\right), 3d\left(\mathcal{C}_3\right)\right\}$. Now, choose 
\begin{equation*}
\mathcal{T}=
\begin{bmatrix}
0&2&0&2\\
0&6&0&3\\
1&3&4&1\\
6&4&0&6
\end{bmatrix}\ ,\ 
\mathbf{G}_2=
\begin{bmatrix}
1&0&6&3\\
0&1&3&3
\end{bmatrix}\  , \ \mathbf{G}_3=
\begin{bmatrix}
1&0&3&5\\
0&1&4&0
\end{bmatrix},
\end{equation*}
and define $\mathcal{C}_1=\mathbb{F}_7^4$, $\mathcal{C}_2$ the $[4,2,3]_7$ code generated by $\mathbf{G}_2$, and $\mathcal{C}_3$ the $[4,2,2]_7$ code generated by $\mathbf{G}_3$. With these settings, we find that the minimum distance of $\mathcal{Q}$ is $d\left(\mathcal{Q}\right)=6$. Once again, the lower bound of Theorem \ref{second_bound} is tight, but that of Theorem \ref{first_bound} is not tight.
\end{example}

\begin{example}
Theorem \ref{second_bound} establishes tight lower bounds on the minimum distances of the GMP codes introduced in Examples \ref{Ex_bound1} and \ref{Ex_bound2}. Specifically, for Example \ref{Ex_bound1}, we have $\tau_1=\tau_2=4$, so $d\left(\mathcal{Q}\right) \ge  \min\left\{2d\left(\mathcal{C}_1\right), 2d\left(\mathcal{C}_2\right) \right\}$; in contrast, $\tau_1=\tau_2=4$ and $\tau_3=5$ for Example \ref{Ex_bound2}, and hence, $d\left(\mathcal{Q}\right) \ge  \min\left\{2d\left(\mathcal{C}_1\right), 2d\left(\mathcal{C}_2\right), d\left(\mathcal{C}_3\right) \right\}$. Therefore, in these examples, the lower bounds of Theorems \ref{first_bound} and \ref{second_bound} coincide and are tight.
\end{example}

\section{Conclusion}
\label{Concl}
We introduced a comprehensive class, the class of GMP codes, which comprises MP and QT codes simultaneously. We presented several examples to illustrate that this class contains many codes with the best-known parameters that are neither MP nor QT. We presented a generator matrix formula for any linear GMP code that generalizes the particular case of classical MP codes. We also determined the GMP code's size and dimension. We proved that GMP codes include all QT codes by showing that any QT code has a GMP structure. Two different lower bounds on the minimum distance of GMP codes were then established. We have shown the tightness of these bounds. These bounds seem to generalize two other bounds that have been mentioned in MP codes literature, however, they require a less restrictive condition than NSC matrices.

%Bibliography
\bibliographystyle{unsrt}

\end{document}